\newif\ifAPOCS
\newif\ifNoFormat
\title{\Large Improved Bounds for Scheduling Flows under Endpoint Capacity Constraints}
\author{
Searidang Pa\thanks{Northeastern University, \href{mailto:pa.s@northeastern.edu}{pa.s@northeastern.edu}}
\and Rajmohan Rajaraman \thanks{Northeastern University, \href{mailto:rajaraman.r@northeastern.edu }{rajaraman.r@northeastern.edu}}
\and David Stalfa\thanks{Northeastern University, \href{mailto:stalfa.d@northeastern.edu}{stalfa.d@northeastern.edu}.}
}
\date{}
\definecolor{lightgray}{rgb}{0.9, 0.9, 0.9}
\DeclarePairedDelimiter\ceil{\lceil}{\rceil}
\DeclarePairedDelimiter\floor{\lfloor}{\rfloor}
\newcommand*{\rom}[1]{\expandafter\@slowromancap\romannumeral #1@}
\newcommand{\ecfs}{\mbox{\sc ecfs}}
\newtheorem{thm}{Theorem}
\newcommand{\rowspace}{\textcolor{white}{\raisebox{-1.5mm}{\scalebox{3}{$($}}}}
\newcommand{\toprowspace}{\textcolor{lightgray}{\raisebox{-.5mm}{\scalebox{2}{$($}}}}
\newcommand{\opt}{\mbox{\textsc{Opt}}}
\newcommand{\alg}{\mbox{\textsc{Alg}}}
\newcommand{\aster}{\textasteriskcentered{}}
\newcommand{\junk}[1]{}
\renewcommand\footnotetextcopyrightpermission[1]{} 
\title{Improved Bounds for Scheduling Flows under Endpoint Capacity Constraints}
\author{Searidang Pa \ \ \ \ Rajmohan Rajaraman \ \ \ \ David Stalfa}
\newtheorem{theorem}{Theorem}
\newtheorem{corollary}{Corollary}[theorem]
\newtheorem{lemma}{Lemma}[section]
\DeclarePairedDelimiter\ceil{\lceil}{\rceil}
\DeclarePairedDelimiter\floor{\lfloor}{\rfloor}
\newcommand*{\rom}[1]{\expandafter\@slowromancap\romannumeral #1@}
\newcommand{\ecfs}{\mbox{\sc ecfs}}
\newcommand{\rowspace}{\textcolor{white}{$\displaystyle \frac{\frac{1}{1}}{\frac{1}{1}}$}}
\newcommand{\toprowspace}{\textcolor{white}{$\displaystyle \frac{1}{1}$}}
\declaretheoremstyle[
spaceabove=6pt, spacebelow=6pt,
headfont=\normalfont\bfseries,
headindent=0em,
bodyfont=\normalfont\itshape,
postheadspace=0.5em,
]{mystyle}
\declaretheorem[name=Theorem, style=mystyle]{thm} 
\theoremstyle{mystyle} 
\renewcommand{\comment}[1]{\textcolor{blue}{#1}}
\newcommand{\junk}[1]{}
\begin{abstract}
We consider the problem of scheduling flows subject to endpoint capacity constraints.  We are given a set of capacitated nodes and an online sequence of requests where each request has a release time and a demand that needs to be routed between two nodes.  A schedule specifies, for each time step, the requests that are routed in that step under the constraint that the total demand routed on a node in any step is at most its capacity.  A key performance metric in such a scheduling scenario is the response time (or flow time) of a request, which is the difference between the time the request is completed and its release time.  Previous work has shown that it is impossible to achieve bounded competitive ratio for average response time without resource augmentation, and that a constant factor competitive ratio is achievable with augmentation exceeding two (Dinitz-Moseley Infocom 2020).  For the maximum response time objective, the best known result is a 2-competitive algorithm with a resource augmentation at least 4 (Jahanjou et al SPAA 2020).

In this paper, we present improved bounds for the above flow scheduling problem under various response time objectives.  Our first result is a lower bound showing that, without resource augmentation, the best competitive ratio for the maximum response time objective is $\Omega(n)$, where $n$ is the number of nodes. The remaining results present simple, resource-augmented algorithms that are competitive for the maximum response time objective in their respective settings. Our first algorithm, Proportional Allocation, uses $(1+\varepsilon)$ resource augmentation to achieve a  $(1/\varepsilon)$-competitive ratio for maximum response time in the setting with general demands, general capacities, and splittable jobs, for any $\varepsilon > 0$. Our second algorithm, Batch Decomposition, is $2$-competitive (resp., matches optimum) for maximum response time using resource augmentation 2 (resp., 4) in the setting with unit demands, unit capacities, and unsplittable jobs.  We also derive bounds for the simultaneous approximation of average and maximum response time metrics.
\end{abstract}
\begin{document}

\maketitle

\ifNoFormat
\fancyhead[]{}      
\fi

\ifAPOCS
\begin{abstract} 
\small\baselineskip=9pt 

\end{abstract}
\fi 

\section{Introduction}
We study the problem of scheduling an online sequence of flows subject to endpoint capacity constraints, which we call \emph{Endpoint Capacitated Flow Scheduling} or \ecfs.  Such flow scheduling instances arise in diverse scenarios in networks and distributed systems, where requests for communication or data transfer need to be satisfied, while meeting constraints at the source and destination ports of the requests.  For instance, allocating resources for packet flows in a crossbar switch is a classic example of this flow scheduling problem~\cite{Giaccone26,Gong28,Guez,Shah50}.  More recently, variants of \ecfs\ have been used to model the scheduling of dynamically reconfigurable  topologies~\cite{jia+etal.opticalwan.17,dinitz+moseley.reconfigurable.20}.

We model \ecfs\ by a set of capacitated nodes and an online sequence of requests where each request has a release time and a demand that needs to be routed between two nodes.  An \ecfs\ \emph{schedule} specifies, for each time step, the requests that are allocated in that step under the constraint that the total demand allocated at any endpoint in any step is at most its capacity.  For a given request, two measures of interest are its \emph{completion time} and its \emph{response time}, which is the difference between its completion time and its release time.  \junk{While extensive work has been done on optimizing the makespan (maximum completion time) and average completion time~\cite{}, response time metrics have only recently gained attention.}  In online scheduling scenarios with variable release times of flows, response time metrics are arguably of greater significance~\cite{dinitz+moseley.reconfigurable.20}.  

\subsection{Overview of Results \& Techniques.}
The focus of this paper is on designing and analyzing online algorithms for \ecfs\ under response time metrics in the competitive analysis framework~\cite{borodin+e:online,sleator+t:online}.  Two natural variants of response time metrics are the \emph{average response time}, which considers the overall performance of the schedule by giving equal weight to each job, and the \emph{maximum response time}, which is motivated by fairness and aims to provide a guarantee to all jobs. We differentiate between the \emph{general case}, where demands and capacities are arbitrary integers, and the \emph{unit case}, where demands and capacities are unit.  We further differentiate between \emph{splittable} jobs, which can be fractionally scheduled over multiple time steps, and \emph{unsplittable} jobs which need to be satisfied in a single step.  

Previous work has shown that it is impossible to achieve a bounded competitive ratio for average response time without resource augmentation~\cite{dinitz+moseley.reconfigurable.20,jahanjou+rajaraman+stalfa.flowswitch.20} and that a constant factor competitive ratio is achievable for splittable jobs with augmentation exceeding two~\cite{dinitz+moseley.reconfigurable.20}.  For the maximum response time objective, the best known result is a 2-competitive algorithm with a resource augmentation 4~\cite{jahanjou+rajaraman+stalfa.flowswitch.20}.  In this paper, we present \emph{improved bounds for \ecfs\ under various response time objectives}. 

\paragraph{Lower bound for maximum response time.}  Our first result is a lower bound on the best competitive ratio achievable for maximum response time (without resource augmentation).  

\ifAPOCS
\begin{thm}
For any deterministic online \ecfs\ algorithm Alg that is allowed to split jobs, there exists an instance with unit demands and capacities such that the maximum response time of Alg is $\Omega(n)$ times that achieved by an optimal schedule that does not split jobs, where $n$ is the number of nodes.
\label{thm:lower}
\end{thm}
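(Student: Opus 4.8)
The plan is to exhibit, for each deterministic online algorithm $\alg$ that may split jobs, an \emph{adaptive} instance on $n$ nodes with unit demands and unit capacities on which the best non-splitting offline schedule $\opt$ has maximum response time $O(1)$ while $\alg$ is forced to incur maximum response time $\Omega(n)$. The relevant structural fact is that with unit demands and capacities a non-splitting schedule serves a matching on the node set in each step, whereas $\alg$ may serve a fractional matching (total rate at most $1$ at each node), and a request completes once the cumulative rate devoted to it reaches $1$. I would build the instance in $\Theta(n)$ stages, at each stage observing $\alg$'s behavior so far and appending new requests, so that the instance depends on $\alg$.

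The engine of the lower bound is a small \emph{conflict gadget}: present at a node $v$ two mutually conflicting requests (both incident to $v$). Since at most one unit can be served at $v$ per step, after the release step $\alg$ has served at most one unit across the two requests, so the request it \emph{neglected} still has at least $1/2$ of its demand outstanding, and the adversary learns which one that is. A single such gadget only yields an $O(1)$ gap, because once the adversary pursues the neglected request (by piling further conflicting requests onto its far endpoint) $\alg$ can \emph{recover}, devoting that endpoint's capacity to finishing it. The key idea is to \emph{nest} gadgets so that recovery is globally self-defeating: the far endpoint of a neglected request is reused as the entry node of a fresh conflict gadget one level deeper, so that in any step in which $\alg$ makes progress on an outer neglected request it necessarily forfeits progress at the inner gadget, which the adversary then exploits. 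Recursing to depth $\Theta(n)$ with a fresh block of $O(1)$ nodes per level (keeping the total within $n$ nodes), I would maintain the invariant that after $t$ steps there remains an incomplete $\alg$-request that was released by time $O(1)$; pushing $t$ up to $\Theta(n)$ forces $\alg$'s maximum response time to $\Omega(n)$.

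For the matching bound on $\opt$, I would write down an explicit non-splitting offline schedule for whatever instance the adversary ends up producing and verify it completes every request within $O(1)$ steps of release --- for instance, serving in each step a matching that prioritizes the oldest outstanding requests. This is where the design must be tuned: each level should contribute only $O(1)$ requests, and their release times should be spaced so that no node is ever an instantaneous bottleneck, so that the realized graph has bounded degree in every short time window regardless of which adaptive branch $\alg$'s neglect decisions force the adversary to take. Combining the two bounds gives the ratio $\Omega(n)$. I expect the main obstacle to be exactly the nesting and its analysis: one must show that for \emph{every} deterministic online $\alg$ --- not just greedy-like ones --- the recovery move at one level provably costs as much as neglecting the next level down, which constrains the conflict pattern and the timing of releases quite tightly.
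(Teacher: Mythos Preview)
Your nested-gadget plan has a real gap: it conserves the deficit but does not \emph{accumulate} it. Concretely, your engine guarantees that after each level some tracked request has at least $1/2$ of its demand remaining, but the identity of that request advances with the level --- the level-$t$ neglected request was released around step $t$, so its age is $O(1)$, not $\Omega(n)$. Your stated invariant, ``after $t$ steps there remains an incomplete $\alg$-request that was released by time $O(1)$,'' is precisely what the construction fails to deliver: once you move the hot node to $v_{t}$, nothing prevents $\alg$ from using the now-quiet earlier nodes $v_0,\ldots,v_{t-2}$ to finish all older requests within $O(1)$ further steps. Equivalently, ``recovery at level $i$ costs level $i+1$'' is a conservation law, and a conserved $O(1)$ backlog never yields an $\Omega(n)$ response time.

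The paper's proof (which adapts the Irani--Leung conflict-scheduling construction) supplies exactly the missing accumulation mechanism, and it looks quite different from nesting. The instance lives on a fixed path of $n=4K+5$ nodes with edges labeled $-2K{-}2,\ldots,2K{+}1$. An adversarial \emph{subroutine} injects requests so that, starting from excess load $(Ck+c)/2$ on every odd edge, $\alg$ is forced after $Ck{+}c{+}1$ rounds to have excess $(Ck{+}c{+}1)/2$ concentrated on some single node $i$; meanwhile a companion offline schedule clears those same requests with response time $C$ and leaves $i$ empty. The outer loop then runs a \emph{spreading phase}: for $\Theta(K)$ blocks it alternately floods all even edges and all odd edges, which the offline schedule handles in the arrival round, but which forces $\alg$ (to avoid $\Omega(Cn)$ response time) to shift its concentrated excess back out across \emph{all} odd edges. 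This resets the hypothesis of the subroutine with the excess incremented by $1/2$, and iterating $KC$ times drives $\alg$'s backlog to $\Theta(KC)$, hence maximum response time $\Omega(Cn)$ against an offline bound of $C$. The two pieces you are missing are (i) the half-unit \emph{increment} per iteration rather than mere conservation, and (ii) the spreading step that redistributes the accumulated excess so the increment argument can be reapplied.
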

\fi

\ifNoFormat
\begin{restatable}{thm}{lowerbound}
For any deterministic online \ecfs\ algorithm $Alg$ that is allowed to split jobs, there exists an instance with unit demands and capacities such that the maximum response time of $Alg$ is $\Omega(n)$ times that achieved by an optimal schedule that does not split jobs, where $n$ is the number of nodes.
\label{thm:lower}
\end{restatable}
\fi

The strong lower bound of Theorem~\ref{thm:lower} holds even for bipartite flows. The proof is an adaptation of an elegant construction in \cite{irani+leung.conflicts.03} which establishes a similar lower bound for \emph{conflict scheduling}, a generalization of \ecfs\ with unsplittable jobs. Together with previous lower bounds for average response time, Theorem~\ref{thm:lower} indicates that competitive ratio bounds for \ecfs\ under response time metrics are more meaningful in settings that allow for resource augmentation.  In the remainder of the paper, we focus on the tradeoff between resource augmentation and the competitive ratio achieved by \ecfs\ schedules.

\paragraph{Upper bounds for maximum response time.} 
Our next set of main results provides upper bounds on the competitiveness of two algorithms. The first, Proportional Allocation, applies in the setting with general demands, general capacities, and splittable jobs. The second, Batch Decomposition, applies in the setting with unit demands, unit capacities, and unsplittable jobs.

\ifAPOCS
\begin{thm}
Given an \ecfs\ instance with splittable jobs, Proportional Allocation yields a $(1/\varepsilon)$-competitive schedule for maximum response time, assuming the capacity of each node is increased by a factor of $1 + \varepsilon$, for any $\varepsilon > 0$.
\label{thm:propalloc}
\end{thm}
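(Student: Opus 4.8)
The plan is to run Proportional Allocation with every node capacity scaled to $(1+\varepsilon)c_w$ (writing $c_w$ for the original capacity of node $w$) and to compare it against an optimal schedule $\opt$ that uses the original capacities. Let $T^*$ be the maximum response time of $\opt$. It suffices to show that Proportional Allocation completes every job within $T^*/\varepsilon$ of its release time.

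The first step is a structural property of $\opt$ that is the only place optimality is used: for any node $w$ and any interval $[a,b]$, the total demand of jobs with an endpoint at $w$ that are released during $[a,b]$ is at most $c_w(b-a+T^*)$. Indeed, $\opt$ completes every such job by time $b+T^*$, none was released before $a$, and $\opt$ routes at most $c_w$ units of demand through $w$ per step, so all of this demand is processed by $\opt$ through $w$ within the window $[a,b+T^*]$.

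The main argument is by contradiction, via a first violation. Suppose some job has response time larger than $T^*/\varepsilon$ under Proportional Allocation, and let $t^*$ be the earliest time at which some still-active job $j^*$ has age larger than $T^*/\varepsilon$; let $r$ be its release time and $u,v$ its two endpoints. Fix any step $\tau \in [r,t^*)$. Since $j^*$ is active and does not complete at step $\tau$, the definition of Proportional Allocation guarantees that $j^*$ receives at least $\min( (1+\varepsilon)c_u d_{j^*}/\Phi_u(\tau), (1+\varepsilon)c_v d_{j^*}/\Phi_v(\tau) )$ units of service at step $\tau$, where $\Phi_w(\tau)$ is the total demand of jobs active at $w$ at step $\tau$; the cap at the residual demand of $j^*$ cannot bind, precisely because $j^*$ does not complete. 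By the choice of $t^*$, every job active at step $\tau$ has age at most $T^*/\varepsilon$, hence was released in $[\tau-T^*/\varepsilon,\tau]$, so the structural property gives $\Phi_w(\tau)\le c_w(T^*/\varepsilon+T^*)$ for both $w\in\{u,v\}$. Substituting, $j^*$ receives at least $\varepsilon d_{j^*}/T^*$ units of service at every step of $[r,t^*)$, and summing over the more than $T^*/\varepsilon$ steps of this interval shows $j^*$ receives strictly more than $d_{j^*}$ units in total — impossible, since $j^*$ is still active at $t^*$.

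The step that carries the real content is the per-step service bound: because each job is constrained simultaneously at two nodes and its natural proportional shares there may disagree, the argument must rely on exactly the guarantee Proportional Allocation provides — that every active job receives at least the minimum of its proportional shares over its two endpoints — and one must check this prescription is feasible at every node (it is, since the amount taken at node $w$ is at most the sum of the jobs' proportional shares there, which equals the scaled capacity $(1+\varepsilon)c_w$). The remaining work is bookkeeping: choosing the first violation so that the loads $\Phi_u(\tau),\Phi_v(\tau)$ are governed solely by recently released jobs, and checking that the additive rounding terms inherent to the discrete-time model are absorbed by the $(1+\varepsilon)$ speedup, so that the clean bound $T^*/\varepsilon$ goes through.
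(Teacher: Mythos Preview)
Your proof is correct and follows essentially the same approach as the paper: bound the total pending demand at each endpoint of $j^*$ using a structural property of the optimum (the paper isolates this as the \emph{interval lower bound} $L\le\opt$ in a separate lemma, while you derive the same inequality inline from $\opt$), then observe that the per-step fraction of $j^*$ scheduled is at least $(1+\varepsilon)/\bigl((1+\varepsilon)T^*/\varepsilon\bigr)=\varepsilon/T^*$, and sum over $T^*/\varepsilon$ rounds. Your first-violation framing is equivalent to the paper's forward induction on the release round $T$; both serve to guarantee that every job pending at step $\tau$ has age at most $T^*/\varepsilon$, which is exactly what makes the load bound $\Phi_w(\tau)\le c_w(T^*/\varepsilon+T^*)$ go through.
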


\ifNoFormat
\begin{restatable}{thm}{propalloc}
For any \ecfs\ instance with splittable jobs, the Proportional Allocation algorithm yields a schedule that is $(1/\varepsilon)$-competitive for maximum response time, assuming the capacity of each node is increased by a factor of $1 + \varepsilon$, for any $\varepsilon > 0$.
\label{thm:propalloc}
\end{restatable}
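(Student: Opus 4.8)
The plan is to show that for every job $j$, the Proportional Allocation schedule (run with every capacity scaled by $1+\varepsilon$) completes $j$ within $R^*/\varepsilon$ of its release, where $R^* = \opt(\sigma)$ denotes the maximum response time of an optimal un‑augmented schedule on the instance $\sigma$; this immediately yields the claimed $(1/\varepsilon)$‑competitiveness. The first ingredient is a \emph{workload bound} extracted from the optimal schedule: for every node $w$ and every pair of times $a\le b$, the total demand of jobs having an endpoint at $w$ and release time in $(a,b]$ is at most $c_w(b-a+R^*)$. Indeed, $\opt$ completes each such job entirely within $(a,b+R^*]$ while spending at most $c_w$ units of capacity at $w$ per step, so the demand $\opt$ routes at $w$ during $(a,b+R^*]$ — which accounts for all of those jobs — is at most $c_w(b-a+R^*)$. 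Amortized over long windows this says demand arrives at $w$ at rate at most $c_w$, a factor $1+\varepsilon$ below the augmented capacity.

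Next I would isolate the per‑step guarantee the algorithm gives: at every step $t$ and every job $i$ active at $t$, the amount routed for $i$ is at least $(1+\varepsilon)c_w d_i/D_w(t)$ for the endpoint $w\in\{u_i,v_i\}$ that is $i$'s bottleneck, where $D_w(t)$ is the total demand of jobs active at $w$ at step $t$. The heart of the argument is then the \emph{invariant}
\[
D_w(t)\;\le\;\frac{1+\varepsilon}{\varepsilon}\,c_w R^{*}\qquad\text{for every node }w\text{ and time }t,
\]
proved by induction on $t$. The easy part is the influx/outflow bookkeeping at a single node: the workload bound controls how fast demand enters $w$, and whenever $D_w(t)$ is of order $c_w R^{*}$ the node drains demand at a rate close to $(1+\varepsilon)c_w$, exceeding the arrival rate by the factor $1+\varepsilon$, so the active demand at $w$ cannot drift past the stated threshold.

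Granting the invariant, the response‑time bound follows: fix $j$ released at time $r$ and suppose it is still active at $r+R^{*}/\varepsilon$. At each of the $R^{*}/\varepsilon$ steps in $[r,r+R^{*}/\varepsilon)$ job $j$ is routed at rate at least $(1+\varepsilon)c_w d_j/D_w(t)\ge \varepsilon d_j/R^{*}$ by the invariant, so the total routed is at least $d_j$ — contradicting that $j$ is incomplete. Hence $j$'s response time is at most $R^{*}/\varepsilon$, and since $j$ was arbitrary, Proportional Allocation is $(1/\varepsilon)$‑competitive for maximum response time.

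The main obstacle is the invariant, specifically the coupling created by the ``min over the two endpoints'' in the rate rule: a job bottlenecked at one endpoint is drained slowly yet keeps contributing its whole demand to $D_w(t)$ at its \emph{other} endpoint $w$, so a purely local influx/outflow argument at $w$ does not close — in the worst case $w$ itself processes almost nothing. To handle this I expect to run the induction simultaneously over all nodes: whenever a job at $w$ is bottlenecked at some $w'\ne w$, the node $w'$ must itself be heavily loaded, and the inductive invariant at $w'$ then limits how long this can persist, which is exactly what pins down the $\frac{1+\varepsilon}{\varepsilon}$ factor and ties the bound together. Minor care with discretization (steps are discrete and splittable jobs may be divided to arbitrary granularity) is also needed, but that is routine.
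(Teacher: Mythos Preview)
Your overall architecture matches the paper's: derive the workload bound $\sum_{j\ni w:\,r_j\in(a,b]} d_j \le c_w(b-a+R^*)$, establish a pending-load invariant at each node, infer a per-step rate of at least $\varepsilon d_j/R^*$ for every active job, and conclude response time at most $R^*/\varepsilon$. The one place where your plan is not yet a proof is exactly the step you flag: the invariant.

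You are right that a local influx/outflow argument at a fixed node $w$ does not close, because jobs bottlenecked at their other endpoint do not contribute to the drain at $w$. But the resolution is not to track ``how long bottlenecking at $w'$ persists'' --- that framing still tries to do outflow accounting, just shifted to another node, and it does not obviously terminate. The clean fix (and what the paper does) is to abandon influx/outflow entirely and instead argue \emph{recency}. Concretely: assume the invariant holds at \emph{every} node for \emph{every} round $t'<t$. Then for any job $j$ active at some $t'<t$, whichever endpoint is its bottleneck satisfies the invariant, so $j$ is routed at rate at least $\varepsilon d_j/R^*$ in round $t'$; summing over $\lceil R^*/\varepsilon\rceil$ consecutive rounds shows $j$ completes within $\lceil R^*/\varepsilon\rceil$ of its release. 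Hence every job still pending at round $t$ was released in the window $[t-\lceil R^*/\varepsilon\rceil+1,\,t]$, and now the workload bound alone gives $D_w(t)\le c_w(\lceil R^*/\varepsilon\rceil + R^*)$, closing the induction. No drain-rate analysis at $w$ is ever needed; the bound on $D_w(t)$ comes purely from the input via recency, not from what the algorithm processed at $w$.

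Equivalently --- and this is how the paper phrases it --- one can make the inductive statement the response-time bound itself (``every job released before round $T$ finishes within $\lceil R^*/\varepsilon\rceil$''), from which the pending-load bound at both endpoints of a job released at $T$ follows immediately. Either way, the point is that the coupling disappears once you recognize that the load bound is a consequence of recency plus the workload bound, rather than of local throughput. A minor note: the paper works with the interval lower bound $L\le R^*$ rather than $R^*$ itself, which tightens the constant slightly but changes nothing structurally; the ceiling you mention is handled by taking the target response time to be $\lceil L/\varepsilon\rceil$.
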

\fi

\ifAPOCS
\begin{thm}
Given an \ecfs\ instance with unsplittable jobs, unit demands, and unit capacities, Batch Decomposition yields a $\max\{2/k,1\}$-competitive schedule for maximum response time, assuming the capacity of each node is increased by a factor of $2k$, for $k \in \{1,2\}$. 
\label{thm:batch}
\end{thm}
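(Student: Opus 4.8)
The plan is to analyze the Batch Decomposition algorithm on an instance with unit demands and unit capacities, where the feasibility constraint at each (augmented) time step amounts to a bipartite matching condition (or a proper edge-coloring / chromatic-index condition) on the multigraph of outstanding requests. First I would set up notation: partition time into \emph{batches}, where the $i$-th batch consists of all jobs released during a window whose length is calibrated to $\opt$'s maximum response time $R^* = \opt$. The key structural fact to establish is that, on the graph $G_i$ of requests in batch $i$, the maximum degree $\Delta_i$ (the load on the busiest node) satisfies $\Delta_i \le c\cdot R^*$ for an appropriate constant $c$; this follows because any node incident to more than $R^*$ requests from a single release time cannot be served within response time $R^*$, and more carefully, summing over the release times in the batch window bounds the total degree. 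With capacity augmented to $2k$, each step we can schedule a subgraph of maximum degree $2k$ — and since a multigraph of maximum degree $\Delta$ decomposes into at most $\Delta$ (for bipartite, by K\H{o}nig) or $\lceil 3\Delta/2\rceil$ (general, by Shannon/Vizing-type bounds) matchings, I would argue the whole batch $G_i$ clears in roughly $\Delta_i / (2k)$ steps.

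Next I would chain these per-batch bounds: a job released in batch $i$ waits at most (time to finish batch $i$) plus possibly a bounded number of earlier-batch residuals, but because Batch Decomposition presumably processes batches in FIFO order and each batch clears within the batch length, the response time of any job is at most a constant multiple of the batch window length. Choosing the batch window length to be exactly $R^* = \opt$, and using the degree bound $\Delta_i \le c R^*$ together with the $2k$-augmented throughput, gives a response time of at most $(c/(2k))\cdot R^*$ up to rounding, plus one batch length for the wait until the batch is "closed." Tuning constants: for $k=1$ (augmentation $2$) this should come out to $\le 2R^* = 2\opt$, and for $k=2$ (augmentation $4$) it should come out to $\le R^* = \opt$, matching the claimed $\max\{2/k,1\}$ ratio. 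I would also need a matching feasibility check for $\opt$ itself: since $\opt$ achieves maximum response time $R^*$ with \emph{unit} capacity, by a counting/flow argument the total demand at any node over any window of length $R^*$ is at most $R^*$, which is what licenses the degree bound above.

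The main obstacle I expect is getting the constants exactly right, which requires being careful about three separate sources of slack: (i) the edge-coloring overhead for \emph{general} (non-bipartite) multigraphs — a multigraph of max degree $\Delta$ may need up to $\lfloor 3\Delta/2\rfloor$ colors (Shannon's theorem), so naively one loses a factor $3/2$, and the analysis must absorb this into the augmentation factor rather than the competitive ratio; (ii) the "boundary" wait — a job released near the end of a batch window may have to wait almost a full window before its batch is processed, contributing an additive $R^*$; and (iii) rounding when $\Delta_i/(2k)$ is not an integer. Reconciling (i) with the clean statement is the crux: I suspect the algorithm is defined so that each batch is handled by decomposing $G_i$ into matchings and the augmentation $2k$ is precisely what lets $k$ matchings be merged per step while still respecting the Shannon overhead, so that $\lceil \lfloor 3\Delta_i/2\rfloor / (2k)\rceil$ steps suffice and this is $\le \max\{2/k,1\}\cdot R^* + O(1)$; verifying this inequality under the two cases $k\in\{1,2\}$ and showing the additive $O(1)$ can be charged to $\opt \ge 1$ is the calculation I would do last. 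An alternative that avoids the non-bipartite overhead entirely is to reduce to bipartite matchings by orienting each request (splitting the capacity into an "in" half and an "out" half); if the algorithm does this, K\H{o}nig's theorem gives exactly $\Delta$ matchings and the constants fall out cleanly, so I would first check which route Batch Decomposition takes.
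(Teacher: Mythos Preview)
Your overall architecture is right --- batch the requests, bound the maximum degree of each batch graph via a load argument against the interval lower bound, decompose the batch graph, and schedule pieces of the decomposition using the augmented capacity --- but two concrete choices differ from the paper in ways that matter for the constants.

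First, the decomposition tool. You reach for edge coloring (matchings), and correctly flag Shannon's $\lfloor 3\Delta/2\rfloor$ bound as the obstacle for general multigraphs. That overhead is real and it \emph{does} break the inequality you need: with augmentation $2k$ you can pack $2k$ matchings per round, so a batch of max degree $\Delta$ takes $\lceil 3\Delta/(4k)\rceil$ rounds; plugging in the degree bound $\Delta \le (1+1/k)L$ for $k=1$ gives $3L/2$ rounds, which exceeds the $L$ you need to keep batches from growing. The paper sidesteps this entirely by using Petersen's 2-factor theorem: a multigraph of max degree $2D$ decomposes into $D$ spanning subgraphs each of max degree $2$. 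With augmentation $2k$ you schedule $k$ such 2-factors per round, so a batch clears in $\lceil \Delta/(2k)\rceil$ rounds with no $3/2$ penalty. This is exactly why the augmentation is $2k$ rather than $k$: it buys you degree-$2$ pieces rather than matchings. Your bipartite-orientation alternative does not apply here since the instance graph need not be bipartite and there is no canonical orientation.

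Second, the batch structure. You calibrate fixed-length windows to $\opt$, but this is an online algorithm that does not know $\opt$ (or $L$). The paper's batches are dynamic: a new batch begins the moment the previous one's 2-factors are exhausted. The analysis is then an induction showing that consecutive batch lengths never exceed $L/k$: if the previous batch had length at most $L/k$, the degree of the new batch graph is at most $L + L/k$ (by the interval lower bound), Petersen gives at most $(L + L/k)/2$ 2-factors, and scheduling $k$ per round finishes in $L/(2k) + L/(2k^2) \le L/k$ rounds. A job waits at most one batch for collection and one for execution, giving response time $\le 2L/k$, hence the $\max\{2/k,1\}$ ratio.
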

\fi

\ifNoFormat
\begin{restatable}{thm}{batch}
Consider a given \ecfs\ instance with unsplittable jobs, unit demands and capacities. Then, by increasing all node capacities by a factor of $2k$ for positive integer $k$, Batch Decomposition schedules all requests with response time 
at most $\max\{(2/k), 1\}$ times the optimal maximum response time. 
\label{thm:batch}
\end{restatable}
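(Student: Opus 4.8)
The plan is to analyze the Batch Decomposition algorithm by grouping requests into batches according to release time, arguing that each batch can be scheduled quickly using the augmented capacity, and then bounding the total delay any request can accumulate waiting behind earlier batches. The key structural fact I would establish first is a lower bound on the optimal maximum response time $R^* = R^*(\opt)$ in terms of the load: if at some point $n_v$ requests have been released that use node $v$, then any schedule (even one with unit capacities) needs roughly $n_v$ steps to clear them, so $R^* \geq \max_v n_v - (\text{time span of their releases})$. More usefully, over any window of $t$ consecutive time steps ending at time $\tau$, the number of requests released in that window that touch any fixed node $v$ is at most $t + R^*$, since otherwise the $(t+R^*+1)$-th such request released cannot complete within $R^*$ steps of its release under unit capacity. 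This "window load" bound is the engine of the whole argument.

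**Next I would** describe Batch Decomposition precisely enough to analyze it: partition time into consecutive phases, where phase $i$ consists of requests released during an interval of some length $L$ (to be chosen as a function of $R^*$, or rather the algorithm works without knowing $R^*$ and we choose $L$ adaptively in the analysis), and process the batches one after another. Within a single batch, all requests have unit demand, so the requests touching a common node $v$ form a conflict graph that is a union of two "stars" (one at each endpoint) — equivalently the requests form a bipartite (multi)graph on the nodes, and a set of requests can be scheduled simultaneously iff it is a matching. A batch with maximum node-degree $d$ decomposes into $d$ matchings by König's edge-coloring theorem, so with \emph{no} augmentation a batch clears in $d$ steps; here $d \leq L + R^*$ by the window-load bound. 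With capacity augmented by a factor $2k$, we can run $2k$ matchings concurrently, so the batch clears in $\lceil d / (2k) \rceil \leq \lceil (L+R^*)/(2k)\rceil$ steps. The response time of a request in batch $i$ is then bounded by (time from its release to the end of batch $i-1$'s processing) plus (length of batch $i$'s processing). Choosing $L = R^*$ gives batch-processing length $\leq \lceil 2R^*/(2k)\rceil = \lceil R^*/k \rceil$, and a request released in phase $i$ waits at most $L$ for its phase to close plus the processing of its own phase, but crucially each phase's processing takes at most as long as a phase window when $k \geq 1$, so batches do not accumulate backlog — phase $i$'s processing finishes before phase $i+2$ even begins (for $k=1$) or keeps pace (need to check the telescoping carefully). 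The total response time works out to at most $2L/k + (\text{lower-order}) \le (2/k) R^* + O(1)$, and since $R^*$ is an integer and response times are integers, for $k=2$ this collapses to exactly $R^*$.

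**The main obstacle** I anticipate is the bookkeeping around backlog accumulation and the fact that the algorithm does not know $R^*$ in advance: Batch Decomposition must commit to a batch structure online, so I cannot literally set $L = R^*$. The standard fix is a doubling / adaptive scheme — run with a guess for $R^*$, and when the observed backlog reveals the guess is too small, double it — but this introduces constant-factor slack that threatens the clean $\max\{2/k,1\}$ bound, especially the exact-optimality claim at $k=2$. I expect the real argument instead defines batches purely by a fixed interval length (e.g., phase $i$ = requests released in $[(i-1)T, iT)$ for a universal constant $T$, perhaps $T=1$ so each "batch" is one time step's worth of arrivals) and then proves that with augmentation $2k$ the cumulative processing queue never lags the release stream by more than $\lceil (\text{current queue length})/(2k)\rceil$ steps, using the window-load bound $(\text{arrivals in last } t \text{ steps touching } v) \le t + R^*$ as an invariant that caps queue growth. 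Making that invariant tight enough to yield exactly $R^*$ (not $R^*+1$) at $k=2$ — handling the ceiling in $\lceil d/(2k)\rceil$ and the boundary request that defines $R^*$ — is the delicate step; everything else is König's theorem plus a geometric-series-style telescoping of per-batch delays.
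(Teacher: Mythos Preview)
Your outline has the right high-level shape---window-load bound, batch, decompose, bound the response as wait plus processing---but two concrete pieces do not go through as written.

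First, K\"onig's edge-colouring theorem requires a bipartite multigraph, and the request graph here is an arbitrary multigraph on the node set: two requests can share either endpoint, so there is no bipartition in general. For a general multigraph of maximum degree $d$ the chromatic index can be as large as $\lfloor 3d/2\rfloor$ (Shannon), so ``decompose into $d$ matchings'' is not available. The paper avoids this by decomposing not into matchings but into \emph{$2$-factors} (spanning subgraphs of maximum degree~$2$), using Petersen's theorem: any multigraph of maximum degree $2D$ splits into $D$ such subgraphs. With capacity augmented by $2k$ the algorithm executes $k$ of these $2$-factors per round, so a batch of maximum degree $d$ clears in $\lceil d/2\rceil / k$ rounds---the same arithmetic you wanted, but valid for arbitrary multigraphs.

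Second, your fixed-phase scheme with phase length $L=R^*$ yields processing time $\le R^*/k$ per batch and hence response time at most $R^* + R^*/k = (1+1/k)R^*$, which for $k=2$ is $1.5\,R^*$, not $R^*$; doubling only adds slack. The paper sidesteps the ``we don't know $R^*$'' problem altogether: its batches are \emph{adaptive}---whenever the current batch finishes executing, the next batch is exactly whatever arrived in the meantime. The inductive invariant is that each batch takes at most $L/k$ rounds (with $L$ the interval lower bound, essentially your window-load quantity): if the previous batch took at most $L/k$ rounds, then at most $L/k + L$ requests accumulated at any node, so by Petersen at most $(L/k + L)/2$ $2$-factors are needed, and executing $k$ per round takes at most $(L/k+L)/(2k)=L/(2k)+L/(2k^{2})\le L/k$ rounds. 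Any request then waits at most one batch-length for its batch to form and one more for it to execute, giving response time at most $2L/k\le (2/k)\,\opt$, which is the claimed $\max\{2/k,1\}$ bound.
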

\fi 

\junk{
\begin{theorem}
Consider all $p$-norms over response times. There exists an algorithm that is $2(2+ \varepsilon)/\varepsilon$-competitive simultaneously for all $p \in (1,\infty)$, where the algorithm increases the capacity of each port by a factor of $2 + \varepsilon$. \comment{Dang is working on this}
\end{theorem}
}

\begin{table*}
{%
\centering
\begin{tabular}{|lccccc|}
    \hline
    \rowcolor{lightgray}
    Algorithm \toprowspace & Resources & Competitive Ratio & Objective & Setting & Source   \\
    \hline &&&&& \vspace{-3.5mm} \\ 
    \hline
    Proportional \rowspace & $1 + \varepsilon$  & $1/\varepsilon + 1/\opt$ &  maximum & {\small\begin{tabular}{@{}c@{}}splittable jobs\\  general capacities/demands\end{tabular}} & \aster \\
    \hline 
    LP Batch \rowspace & $4$ &  $2$  & maximum  & {\small\begin{tabular}{@{}c@{}}unsplittable jobs\\  general capacities/demands\end{tabular}} & \cite{jahanjou+rajaraman+stalfa.flowswitch.20} \\
    \hline
    Batch Decomposition \rowspace & $2k$   & $\max\{2/k, 1\}$ & maximum & {\small\begin{tabular}{@{}c@{}}unsplittable jobs\\unit capacities/demands\end{tabular}} & \aster \\
    \hline
    FIFO \rowspace & $2+k$  &  $\max\{2/k, 1\}$ & maximum  & {\small\begin{tabular}{@{}c@{}}unsplittable jobs\\unit capacities/demands\end{tabular}} & \aster   \\
    \hline
    FIFO \rowspace & $2+k$ & $2(2+k)/k$   & average &{\small\begin{tabular}{@{}c@{}}unsplittable jobs\\unit capacities/demands\end{tabular}} & \cite{dinitz+moseley.reconfigurable.20}  \\
    \hline
    Shortest Job First \rowspace & $2 + \varepsilon$   & $2(2+\varepsilon) / \varepsilon$  & average  & {\small\begin{tabular}{@{}c@{}}splittable jobs\\  general capacities/demands\end{tabular}} & \cite{dinitz+moseley.reconfigurable.20} \vspace{-.5mm}
\end{tabular}
}\\%
\begin{tabular}{|l|r|}
    \hline
    \rowcolor{lightgray}
    {\small $\varepsilon$ is any positive real number, $k$ is any positive integer} \hspace {13.3mm} & \hspace{13.22mm} \textasteriskcentered{} {\small indicates a result proved in this paper} \\ 
    \hline
\end{tabular}
\caption{Comparison of Algorithms for Endpoint Capacitated Flow Scheduling}
\label{tab:bounds}
\end{table*}

\junk{

\begin{table*}
{%
\centering
\begin{tabular}{|ll@{}lccc|}
    \hline
    Algorithm \toprowspace & Capacity I&ncrease & Objective & Competitive Ratio  & Setting   \\
    \hline \vspace{-3.5mm} \\ \hline
    LP Batch \rowspace & $4$ && maximum & $2$ \cite{jahanjou+rajaraman+stalfa.flowswitch.20} & {\small\begin{tabular}{@{}c@{}}unsplittable jobs\\  general capacities/demands\end{tabular}} \\
    \hline
    Proportional \rowspace & $1 + \varepsilon$ & $\varepsilon > 0$ & maximum & $1/\varepsilon + 1/OPT$ * & {\small\begin{tabular}{@{}c@{}}splittable jobs\\  general capacities/demands\end{tabular}}  \\
    \\[-0.8em] \hline \\[-0.8em]
    Shortest Job First & $2 + \varepsilon$ & $\varepsilon > 0$  & average & $2(2+\varepsilon) / \varepsilon$ \cite{dinitz+moseley.reconfigurable.20}  & {\small\begin{tabular}{@{}c@{}}splittable jobs\\  general capacities/demands\end{tabular}} \\
    \\[-0.8em] \hline \\[-0.8em]
    FIFO & $2+k$ & $k \in \mathbb{N}^+$  & \begin{tabular}{@{}c@{}}average\\ \\[-0.8em] maximum\end{tabular}&  \begin{tabular}{@{}c@{}}2(2+$k$)/$k$ \cite{dinitz+moseley.reconfigurable.20} \\ \\[-0.8em] $\max\{2/k, 1\}$ *\end{tabular} &{\small\begin{tabular}{@{}c@{}}unsplittable jobs\\unit capacities/demands\end{tabular}}   \\
    \\[-0.8em] \hline \\[-0.8em]
    Batch Decomposition & $2k$ & $k \in \mathbb{N}^+$ & maximum & $\max\{2/k, 1\}$ * & {\small\begin{tabular}{@{}c@{}}unsplittable jobs\\unit capacities/demands\end{tabular}}    \\
    \\[-0.8em] \hline 
\end{tabular}
}\\%
\begin{tabular}{|l|r|}
    \hline
    {\small $\varepsilon$ is any positive real number greater, $k$ is any positive integer} & \hspace{6mm} * indicates results proved in this paper \\ 
    \hline
\end{tabular}
\begin{flushright}
* indicates results proved in this paper
\end{flushright}
\caption{Comparison of Algorithms for Endpoint Capacitated Flow Scheduling}
\label{tab:bounds}
\end{table*}

\medskip
\begin{center}
\begin{tabular}{lccc}
    &\begin{tabular}{@{}c@{}}Capacity\\Increase\end{tabular}& \begin{tabular}{@{}c@{}}Response Time\\Objective\end{tabular} & \begin{tabular}{@{}c@{}}Competitive\\Ratio\end{tabular}   \\
    \hline
    \hline \\[-1em]
    LP Batch & $4$ & maximum & $2$ \cite{jahanjou+rajaraman+stalfa.flowswitch.20} \\
    \\[-1em] \hline \\[-1em]
    \begin{tabular}{@{}l@{}}Equal\\Priority\end{tabular} & $1 + \varepsilon$ & maximum & $1/\varepsilon$ * \\
    \\[-1em] \hline \\[-1em]
    \begin{tabular}{@{}l@{}}Shortest\\Job First\end{tabular} & $2 + \varepsilon$ & average & $2(2+\varepsilon) / \varepsilon$ \cite{dinitz+moseley.reconfigurable.20} \\
    \\[-1em] \hline \\[-1em]
\end{tabular}
\end{center}
\begin{flushright}
\footnotesize
* indicates results in this paper
\end{flushright}
\medskip
}

Our upper bound results for maximum response time and previous bounds for maximum and average response time are summarized in Table~\ref{tab:bounds}.  We note that the LP Batch algorithm does not split jobs, and so assumes that all requests have demands less than the capacities of the nodes on which they arrive. This presents a limitation for LP Batch in the most general setting. Proportional Allocation and Shortest Job First make no such assumption about the size of the requests. However, in the case where jobs are unsplittable, Proportional Allocation and Shortest Job First are not well defined. Both split jobs even when all jobs can be routed in a single round on both of their nodes. 

\paragraph{Simultaneous approximation of response time metrics.}
The upper bound results suggest one should use the Shortest Job First algorithm to optimize average response time, and one should use Proportional Allocation to optimize for maximum response time.  It is natural to ask whether there is an algorithm that is simultaneously competitive for both average and maximum response times.  We find that 
Proportional Allocation has poor competitive ratio for average response time, even under arbitrary resource augmentation (Lemma \ref{thm:propalloc_avg}), while Shortest Job First has poor competitive ratio for maximum response time under resource augmentation at most two (Lemma \ref{thm:shortestfirst_max}). 
However, the upper bounds for these two algorithms yield a new hybrid algorithm which provides guarantees for both average and maximum response time. In each round, route jobs according to Equal Priority with $\varepsilon_1$ and also according to Shortest Job First with $\varepsilon_2$. This specifies an algorithm that is $1/\varepsilon_1$ competitive for maximum response time and $2(2+\varepsilon_2)/\varepsilon_2$ competitive for all $\ell_p$ norms ($p \in [1,\infty)$), where the algorithm increases the capacity on each port by a factor of $3 + \varepsilon_1 + \varepsilon_2$.

For unsplitabble jobs in the unit case, one can establish a better bound.  We show that for any positive integer $k$, a simple FIFO algorithm with resource augmentation of $2 + k$
achieves a competitive ratio of $\max\{1, 2/k\}$ for maximum response time, while previous work of~\cite{dinitz+moseley.reconfigurable.20} established a bound of $2(2 +k)/k$ for average response time.
\junk{
which increases all node capacities by a factor of $2 + \varepsilon$, achieves the following competitive ratios:  $(2(2+\varepsilon)/\varepsilon)$ for average response time and for all $\ell_p$ norms with $p \in (1,\infty)$, and $2/\varepsilon$ for maximum response time (Theorem \ref{thm:FIFO_upperbounds}).}

\junk{
\paragraph{General Lower Bounds.}
In our second set of main results, we present general lower bounds for Endpoing Capacitated Flow Scheduling. 

We show that, without resource augmentation, no algorithm performs well on the metrics we consider. The first such result is an adaptation of the construction given in \cite{irani+leung.conflicts.96} for maximum response time. 

\begin{theorem}
Consider the case of unit demands and unit capacities. For any deterministic online splitting ECFS algorithm with max response time $R_{Alg}$, there exists a nonsplitting schedule with max response time $R^*$ such that $R_{Alg}/R^* = \Omega(n)$, where $n$ is the number of ports. \comment{David is working on this}
\end{theorem}

This result is complemented by a lower bound on the competitiveness of any algorithm for any $\ell_p$ norm with $p \in [1,\infty)$ (Theorem \ref{thm:pnorm_lowerbound}). This proof uses the lower bound construction presented in \cite{jahanjou+rajaraman+stalfa.flowswitch.20, dinitz+moseley.reconfigurable.20}.

\paragraph{Unit Demands.}\comment{do we need to assume unit capacity as well?}
As we noted above, Equal Priority and Shortest Job First are both ill defined in the setting where jobs are unsplittable. LP Batch is well defined in this space, and provides good performance in the most general setting, but its resource augmentation is relatively high in the setting of unit demands and capacities. These considerations motivate our analysis of the unit demand setting, in which we prove performance guarantees for two algorithms. The first is a combinatorial batching algorithm which is $(1+\varepsilon)$-competitive for maximum response time with node capacities increased by a factor of $1 + 2/\varepsilon$.   
}
\paragraph{Techniques.}
The algorithms studied in this paper explore simple techniques for scheduling flows: spreading the flows proportional to load (Proportional Allocation), assigning priorities to jobs based on size or release time (Shortest Job First and FIFO), and scheduling jobs in batches (Batch Decomposition).  Our results indicate that these different approaches yield different tradeoffs for response time.   

All of our upper bounds for maximum response time use the same lower bound on the optimum, which we refer to as the \textit{interval lower bound}. Informally, we consider every interval of input rounds over the course of a given \ecfs\ instance.  Whenever the total load placed on any node during the interval exceeds the length of the interval, we obtain a non-trivial lower bound on the maximum response time, given by the difference between the load and the total capacity of the node over the length of the interval. \junk{For example, if ten unit jobs arrive on a unit capacity port in a single round, then the best response time you can achieve is by executing one of these edges in the round it arrives, and the nine others in the immediate following rounds. Similarly, if a hundred jobs arrive on one node over thirty rounds, then the best response time one can hope for is seventy.}
The interval lower bound is local and applies equally well for the much simpler setting where jobs require the resources of a single node rather than a pair of nodes. In fact, as we prove in Lemma~\ref{thm:interval_gap}, there are instances for which the gap between the interval lower bound and the optimal max response time is quite large. Given the strong lower bound presented in Theorem~\ref{thm:lower}, it is therefore surprising that algorithms with a small amount of resource augmentation are competitive with this weak lower bound, as our upper bounds show.

\junk{
\begin{theorem}
batch algorithm is 2-competitive with max response time with 2 resource augmentation. \comment{Dang is working on this, for non-splittable case}
\end{theorem}

\begin{theorem}
There exists an instance with interval lower bound $L$ for which the optimal maximum response time is $\Omega(\sqrt{n} \cdot L)$.
\end{theorem}

\begin{theorem}
There exists an offline $o(1)$-approximation algorithm to optimize makespan. \comment{Rajmohan is working on this}
\end{theorem}
}

\setcounter{thm}{0}

\subsection{Related work.}

A closely related model studies the scheduling of flows on optical WAN, modelled as reconfigurable networks. Jia et al.\ \cite{jia+etal.opticalwan.17} initiated the study of scheduling on optical WAN under the makespan and sum of completion times objective. Using the same model,  Dinitz and Moseley \cite{dinitz+moseley.reconfigurable.20} provide an algorithm that is $O(1/\varepsilon^2)$-competitive for the weighted sum of response time objective, assuming that the speed of each machine is $2 + \varepsilon$ times that in an optimal solution. Recent work by Kulkarni et al.\ \cite{kulkarni+shmid+schimidt.twotiered.21} extend these same bounds to a two-tiered reconfigurable network in which one tier of the network is fixed and the other tier is reconfigurable. The notion of resource augmentation in these papers is subtly different from ours. These models use \textit{speed} augmentation, which allows for faster completion of a job, while we use \textit{capacity} augmentation which allows more unit flows to be scheduled in each round. This entails that their upper bounds immediately extend to our model, though some of ours are not valid in theirs.  

An identical model to \ecfs\ with unsplittable flows is studied in Jahanjou et al.\ \cite{jahanjou+rajaraman+stalfa.flowswitch.20}, which presents algorithms for the sum of response times and maximum response time objectives. In the setting where the jobs have unit demand and form a bipartite graph on the nodes, the paper provides an offline $O(\log n)/c$-approximation  algorithm for the sum of response times objective with a $1+c$ factor increase on the capacity of each node. For the maximum response time objective, the paper provides an optimal schedule in the offline setting with $2d_{max}-1$ capacity augmentation, where $d_{max}$ is the maximum demand of all requests. The paper also presents a 2-competitive online algorithm for max response time with capacity increase of 4, which uses the offline algorithm as a subroutine. Finally, Jahanjou et al.\ prove the NP-hardness of constructing an \ecfs\ schedule which minimizes maximum response time.

The more general setting of conflict scheduling is studied in 
\cite{irani+leung.conflicts.03, irani+leung.probabalistic.97, even+etal.conflict.09}. In this model, a node represents a class of jobs, and two nodes are adjacent if the corresponding two classes of jobs requre the same resource. In each round, only jobs belonging to an independent set of nodes can be executed. \ecfs\ can be reduced to conflict scheduling by considering the line graph of the requests. For the maximum response time objective, Irani and Leung \cite{irani+leung.conflicts.03} provide an $O(n^3 \opt^2)$-competitve algorithm for interval graphs and an $O(n^2 \opt^2)$-competitive algorithm for bipartite graphs, where $n$ is the number of nodes in the conflict graph. We note that these results do not imply upper bounds in our model because \ecfs\ instances may transform to conflict graphs that are neither bipartite nor interval. The paper also provides a strong $\Omega(n)$ lower bound on the competitive ratio of any algorithm for max response time. To overcome this lower bound, Irani and Leung \cite{irani+leung.probabalistic.97} consider the probabilistic version of model, where a job arrives on node $i$ with probability $p_i$ in each round.

In a related setting, Guez et al.\ \cite{Guez} studies the problem of maximizing the throughput of packets on input-queued switches. Packets are queued at input ports, and are forwarded to output ports through a cross switch fabric. The main difference with \ecfs\ is that their model allows a packet to be dropped, while ours requires that all requests be served. There is extensive literature on scheduling matchings over high-speed crossbar switches and these studies largely adopt a queuing theoretic framework (e.g., see~\cite{Giaccone26,Gong28,Shah50}). Approximation algorithms for average completion time of co-flows, a collection of flows that share a common performance goal, on a non-blocking switch are given in \cite{DBLP:journals/ton/ShafieeG18} and \cite{khuller38}.

\section{Model}
An instance of \ecfs\ is composed of the following elements. We are given a set $V$ of $n$ \textit{nodes}. Each node $i \in V$ has a corresponding capacity $c_i$. There is also a set $J$ of $m$ \textit{requests} or \textit{jobs}. Each request $j$ has an associated set of nodes $\{i,i'\}$,  demand $d_j \in \mathbb{R}^+$, and release $r_j \in \mathbb{N}^+$. For a given node $i$, we say that $i \in j$ if $i$ is contained in the set of nodes associated with $j$. In each round, we may execute any released requests subject to the constraint that for any port $i$, the total demand of all executed requests $j$ adjacent to $i$ is no more than $c_i$. More formally, if $x_{j,t}$ is the fraction of request $j$ executed in round $t$, then, $x_{j,t}>0$ implies that $r_j \le t$ and, for every node $i$, we have that $\sum_{j \ni i} d_jx_{j,t} \le c_i$. We also require that, over the course of the entire schedule, every job is completely executed, i.e. for all jobs $j$, we have that $\sum_t x_{j,t} \ge 1$. In any round $t$ in the schedule, if there is a request $j$ with $r_j \le t$ and $j$ has not been completely scheduled, then we say that $j$ is \textit{pending}. 

For a given instance, the goal is to construct a schedule that has good performance relative to some objective. Here, we consider two objectives: maximum response time and average response time. The response time of a job $j$ is the difference between $j$'s release time $r_j$ and its completion time $C_j$, which is the first round in which the job is completed, i.e. the first round after $r_j$ in which $j$ is no longer pending. Note that if a job is executed in the same round that it arrived, its response time is 1. For a given schedule, the maximum response time is the maximum amount of rounds between the release of a job and its completion: $\max_j \{C_j - r_j\}$. The average response time of a schedule is the sum of all response times over the number of jobs: $\left( \sum_j C_j - r_j \right)/m$. 

We distinguish between splitting and nonsplitting schedules. A \textit{splitting} schedule is permitted to schedule fractions of a job over multiple rounds, i.e. for any job $j$, the fraction of $j$ routed in round $t$ may be between 0 and 1. On the other hand, a \textit{nonsplitting} schedule must excecute the job entirely in a round or not at all, i.e. for any job $j$, the fraction of $j$ routed in round $t$ is either 0 or 1. 

We consider the problem in the online setting. In this setting, at the start of each round $t$, an algorithm receives the set of jobs $j$ such that $r_j = t$. The algorithm must then irrevocably schedule any of the pending jobs in that round before moving to the next. We analyze the performance of such algorithms using competitive analysis, where the objective value of the algorithm is compared against the offline optimum~\cite{sleator+t:online,borodin+e:online}.  Formally, if \alg\ is the algorithm's objective value and \opt\ is the offline optimum, then the \textit{competitive ratio} is given by the ratio of \alg\ to \opt.

We also allow our algorithms to use some amount of \textit{resource augmentation}. If an algorithm uses resource augmetation $\gamma$, then in any round the total load that can be scheduled on node $i$ is $\gamma \cdot c_i$. Our competitive analysis then compares the objective value achieved by the algorithm against the offline optimum \textit{without resource augmentation.}

\section{Lower Bound without Resource Augmentation}
\label{sec:lower}

In this section we show that there is no algorithm for \ecfs\ that achieves good competitive ratio for maximum response time without resource augmentation, even for the unit case when all demands and capacities are unit. This result (Theorem~\ref{thm:lower} restated below) motivates our use of resource augmentation in the remainder of the paper.  

\ifNoFormat
\lowerbound*
\fi 

\ifAPOCS
\begin{thm}

\end{thm}
\fi

The construction used to prove Theorem~\ref{thm:lower} closely follows the one used in the elegant lower bound proof for the conflict scheduling problem in~\cite{irani+leung.conflicts.03}.  We note, however, that while
the setting of~\cite{irani+leung.conflicts.03} is more general, their result does not immediately extend to \ecfs\ because they do not allow for splittable jobs. 

We consider a chain with $n = 4K + 5$ nodes for any $K \ge 1$. Starting from the left, label every edge in the path from $-2K-2$ to $2K+1$. For each edge with even label $\ell \ge 2K$, the left node of the edge is labelled $\ell/2$. Figure \ref{fig:LB_chain} shows the construction. Fix an arbitrary integer $C \ge 1$. 
We define the adversary Adv using Input Sequences~\ref{alg:adversarial_subroutine} and~\ref{alg:adversarial_input}, the former being a subroutine used in the latter, which define the adversarial input.   Input Sequence~\ref{alg:adversarial_subroutine} has two parameters $k \le K$ and $c \le C$. Figure~\ref{fig:LB_inputex} provides examples of input patterns generated by the subroutine. 

In any round of a schedule, we say that an edge has \textit{load} $d$ if, at the start of the round, the sum of the fractional parts of pending requests on the edge is $d$. Similarly, we say that a node $i$ has load $d$ in a round if the sum of the loads of both edges adjacent to $i$ in the round is $d$. 

\begin{figure}
    \centering
    \ifAPOCS \includegraphics[]{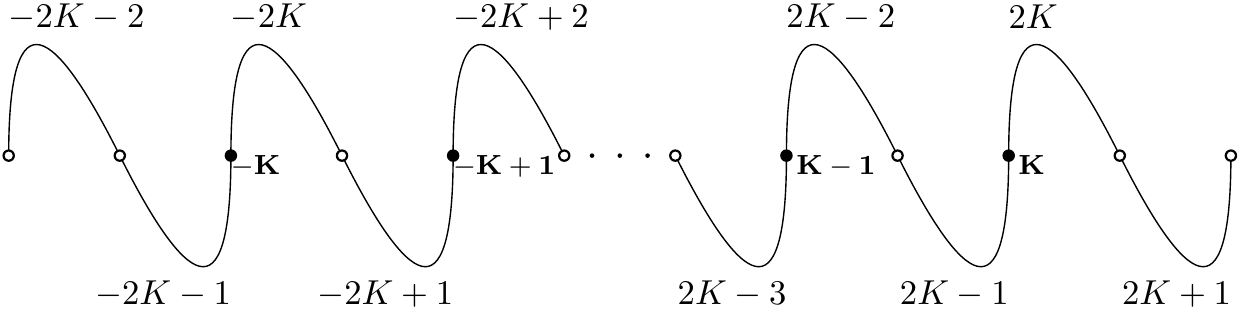} \fi
    \ifNoFormat \includegraphics[width = \columnwidth]{Figures/LB_chain.pdf} \fi 
    \caption{A chain of $4K+5$ nodes. Each edge is labelled from $-2K-2$ to $2K+1$. Each black node is labelled from $-K$ to $K$.}
    \label{fig:LB_chain}
\end{figure}

{\SetAlgorithmName{Input Sequence}{input}{List of Input Sequences}

\begin{algorithm2e}
\KwData{parameters $k \le K$ and $c \le C$}
$t \leftarrow 1$\;
\Repeat{the load on $i$ in Alg is at least $(Ck + c + 1)/2$ \label{line:terminate}}{
    $\ell \leftarrow \lceil t/C \rceil - 1$\;
    the following requests arrive in a single round: one on each odd numbered edge adjacent to nodes $-k, -k+1, \ldots, -(\ell + 1)$,
    one on each edge adjacent to nodes $-\ell, -\ell + 1, \ldots, \ell$, one on each even numbered edge adjacent to nodes $\ell + 1, \ell+2 \ldots, k$\; \junk{
    a request arrives on each odd numbered edge adjacent to nodes $-k, -k+1, \ldots, -(\ell + 1)$\; 
    a request arrives on each edge adjacent to nodes $-\ell, -\ell + 1, \ldots, \ell$\;
    a request arrives on each even numbered edge adjacent to nodes $\ell + 1, \ell+2 \ldots, k$\;
    \textbf{if} \textit{at the end of round $T + t$ there is some unlabelled node $i$ with excess load $(Ck+c+1)/2$ in Alg} \textbf{then return} $i$\;}
    $i \leftarrow$ the unlabelled node with maximum excess load in Alg after scheduling the current round\; 
    $t \leftarrow t + 1$
}
\Return i
\caption{Adversarial Subroutine}
\label{alg:adversarial_subroutine}
\end{algorithm2e}
}

The following two lemmas give the basic strategy the adversary employs. By introducing jobs as in the subroutine, the adversary guarantees that Alg accrues a certain amount of extra load on some node while Adv is able to keep that node empty. By carefully chaining successive runs of this subroutine together, one after the other, the adversary is able to build up a significant amount of excess load in Alg, all while keeping its own schedule Adv mostly empty. Note that Lemma~\ref{lem:lowerbound_increase} also implies that Input Sequence~\ref{alg:adversarial_subroutine} always terminates since the condition at line~\ref{line:terminate} is always eventually met.
 
\begin{lemma}
Suppose that Alg has $(Ck+c)/2$ load remaining on each odd edge in round $T$. Then after $Ck + c + 1$ iterations of Input Sequence~\ref{alg:adversarial_subroutine}, there will be some unlabelled node $i$ such that Alg has $(Ck+c+1)/2$ excess load on $i$.
\label{lem:lowerbound_increase}
\end{lemma}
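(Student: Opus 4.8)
The plan is to track, round by round, how the load on the odd edges evolves in Alg's schedule as the requests prescribed by Input Sequence~\ref{alg:adversarial_subroutine} arrive, and to argue that after $Ck+c+1$ iterations some unlabelled node must have accumulated $(Ck+c+1)/2$ excess load. The key accounting observation is that each iteration of the subroutine introduces exactly one new request on every odd edge adjacent to nodes $-k,\dots,-(\ell+1)$ and on every edge adjacent to $-\ell,\dots,\ell$, and that the value $\ell = \lceil t/C\rceil - 1$ stays constant for blocks of $C$ consecutive rounds before incrementing. So over $Ck+c+1$ rounds the index $\ell$ ranges over $0,1,\dots,k$ (with the last value reached for $c+1$ rounds), and one counts the total number of requests that land on each odd edge: an odd edge adjacent to node $-j$ receives a new request in every round where $\ell+1 \le j$, i.e. in the first $Cj$ rounds, for a total of roughly $Cj$ new requests; combined with the hypothesized starting load $(Ck+c)/2$, and the fact that Alg can remove at most $1/2$ of load per round from an odd edge (since each endpoint has unit capacity and each node touches two edges — this is where the factor of $2$ in the denominators comes from), we get a lower bound on the surviving load.

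First I would set up notation: let $T$ be the round in the hypothesis, and for $t = 1,\dots,Ck+c+1$ let round $T+t$ be the $t$-th iteration. I would write down, for each odd edge $e$ adjacent to node $-j$, the quantity $\mathrm{load}_t(e)$ at the start of round $T+t$, and establish the recurrence $\mathrm{load}_{t+1}(e) \ge \mathrm{load}_t(e) + [\text{new request on } e \text{ in round } T+t] - (\text{amount Alg schedules on } e)$, then sum the "new request" indicators over $t$ using the block structure of $\ell$. Second, I would invoke the capacity constraint at the shared node between two odd edges (the even-labelled "black" nodes versus the unlabelled nodes) to bound how fast Alg can drain load — the structure of the chain, where each unlabelled node sits between two odd edges, is exactly designed so that draining one odd edge competes with draining its neighbor. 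Third, I would argue that the total excess load that has accumulated across all the odd edges, minus what Alg could possibly have scheduled given unit capacities, must concentrate on some unlabelled node to the tune of $(Ck+c+1)/2$; this likely uses an averaging/pigeonhole argument over the nodes in the relevant range together with the fact that Adv keeps the corresponding node empty (which is what lets us call this load "excess").

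The main obstacle I anticipate is getting the bookkeeping on the load exactly right — in particular, correctly handling the boundary between the odd-edge region (nodes $-k,\dots,-(\ell+1)$), the "all edges" region (nodes $-\ell,\dots,\ell$), and the even-edge region (nodes $\ell+1,\dots,k$), and tracking how load migrates as $\ell$ increments and an edge transitions from the even-edge region into the all-edge region into the odd-edge region. The transition is what forces the $+1$ in $(Ck+c+1)/2$: one needs to check that the hypothesis "$(Ck+c)/2$ load on each odd edge in round $T$" is not merely preserved but strictly increased by the end, and that the increase localizes at an unlabelled node rather than being spread out or absorbed at labelled nodes. I would handle this by a careful case analysis on which region an edge lies in at each round, together with the observation that Alg's best case is to always schedule the full $1/2$ it is allowed on each odd edge, from which the clean arithmetic $\sum$ telescopes to the claimed bound.
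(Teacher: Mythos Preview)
Your plan diverges from the paper's argument and contains structural errors that would derail it.

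First, some factual issues with your setup. You write that ``each unlabelled node sits between two odd edges''; in fact each unlabelled node sits between an \emph{even} edge (on its left) and an \emph{odd} edge (on its right), while labelled nodes sit between an odd edge and an even edge. You also claim that the odd edge adjacent to node $-j$ receives a new request only when $\ell + 1 \le j$; this undercounts, because once node $-j$ moves into the middle region (i.e.\ $\ell \ge j$) both of its adjacent edges, including the odd one, receive a request. So that odd edge gets a new request in \emph{every} round, regardless of $\ell$. Finally, it is not true that Alg can remove at most $1/2$ per round from an odd edge: Alg can schedule up to a full unit on any single edge in a round. The factor $1/2$ arises only as an amortized consequence of the chain structure, not as a per-edge cap.

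More importantly, your global counting-plus-pigeonhole plan misses the mechanism that actually forces the extra $1/2$. The paper's proof does not track $\ell$ or its block structure at all. Instead it makes two simple observations: (i) in every round the input places a new request adjacent to every unlabelled node in the active path (this holds uniformly in $\ell$, which is precisely why the three-region construction is designed the way it is), and (ii) edge $-2k-2$ \emph{never} receives a request. Observation (ii) is the anchor for a left-to-right propagation: to keep the unlabelled node to the right of $-2k-2$ below $(Ck+c+1)/2$, Alg must devote essentially all of that node's capacity to edge $-2k-1$; by the capacity constraint at node $-k$, this forces almost nothing to be scheduled on edge $-2k$; this in turn forces the next odd edge to absorb almost all capacity at the next unlabelled node; and so on down the chain until edge $2k$, which receives a request every round but can be drained by at most $1/2$ per round, accumulating $(Ck+c+1)/2$ load. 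Your averaging argument, by contrast, would only recover $(Ck+c)/2$ at some node, not $(Ck+c+1)/2$: the extra $1/2$ comes from the boundary asymmetry at edge $-2k-2$, and you never invoke it.
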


\begin{figure*}
    \centering
    \includegraphics[width=\textwidth]{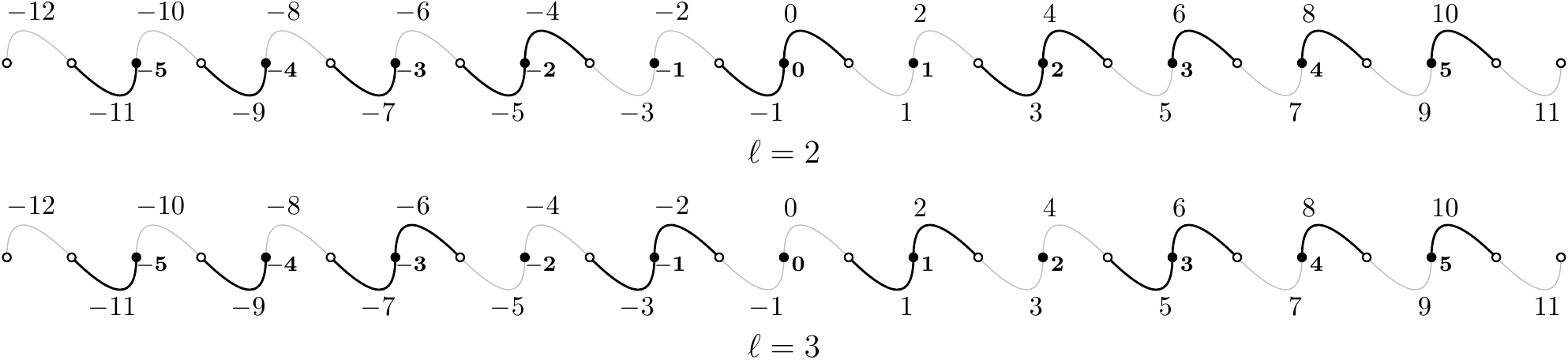}
    \caption{Two request arrival patterns for $k=5$. Top shows arrival pattern for $\ell = 2$, bottom shows patter for $\ell = 3$. Requests arrive on the bold edges.}
    \label{fig:LB_inputex}
\end{figure*}

\begin{proof}
Suppose that Alg has pending load $(Ck + c)/2$ on all odd labelled edges in round $T$. In each round, Input Sequence~\ref{alg:adversarial_subroutine} introduces a new job on every unnumbered node (except the first and last) in the path from edge $2k-2$ to $2k+1$. Note that no job ever arrives on edge $-2k - 2$. So, in order to avoid an excess load of $(Ck + c + 1)/2$ on any node, the fractional components that Alg schedules on each odd edge in the path must sum to at least $1/2$ in each round. Since a request arrives on edge $2k$ in each round, after $Ck+c+1$ rounds there is at least $(Ck+c+1)/2$ load on edge $2k$. 
\end{proof}

\junk{
We now begin to define the schedule $\sigma^*$ against which we will evaluate Alg. When scheduling the input generated by Input Sequence~\ref{alg:adversarial_subroutine},  $\sigma^*$ takes the following actions. Let $i$ be the node returned by Input Sequence~\ref{alg:adversarial_subroutine}. For a given iteration of the for-loop with fixed value of $\ell$, route all jobs that just arrived on the odd edges adjacent to nodes $-k, -k+1, \ldots, -(\ell+1)$ and all jobs that just arrived on the even edges adjacent to nodes $\ell + 1, \ell + 2, \ldots, k$. If no request on $i$ has been routed, schedule all remaining even or odd edge requests depending on if the oldest pending job on $i$ is even or odd. Otherwise, schedule all remaining even or odd requests depending on which has the oldest pending request.}

\begin{lemma}[immediate corollary to Lemma 3.3 in \cite{irani+leung.conflicts.03}]
There exists a schedule $\sigma$ over just the jobs introduced by  Input Sequence~\ref{alg:adversarial_subroutine} such that $\sigma$ has maximum response time $C$. Further, if $T$ is the last round that Input Sequence~\ref{alg:adversarial_subroutine} produces input, then in round $T+1$ the returned node $i$ has no load in $\sigma$ and, in all rounds after $T$, $\sigma$ schedules jobs entirely on even edges or entirely on odd edges.
\label{lem:sigma}
\end{lemma}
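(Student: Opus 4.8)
The plan is to take for $\sigma$ the offline schedule built in the proof of Lemma~3.3 of~\cite{irani+leung.conflicts.03}, transcribed into our notation, and then read off the three claims. The reason this is ``immediate'' is that the requests produced by Input Sequence~\ref{alg:adversarial_subroutine} live entirely on the edges of a path --- equivalently, on the vertices of an interval conflict graph, via the line-graph identification ``edge of the chain $\leftrightarrow$ vertex of the conflict graph'' --- and this is precisely the instance family analyzed in Section~3 of~\cite{irani+leung.conflicts.03}. The only place splittability could enter is that their jobs are unsplittable; but $\sigma$ is itself a nonsplitting schedule whose mere existence we need, so nothing is lost in the translation.

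Two structural facts drive the construction. First, the odd-labelled edges of the chain are pairwise non-adjacent, and so are the even-labelled edges, so in any round a feasible nonsplitting schedule may serve every pending request on odd edges, or every pending request on even edges, with each endpoint then carrying load at most $1$. Second, over a block of rounds sharing a fixed value of $\ell$ --- and this is a block of exactly $C$ consecutive rounds, since $\ell=\lceil t/C\rceil-1$ --- a request arrives every round on each odd edge incident to a node in $\{-k,\ldots,-(\ell{+}1)\}\cup\{-\ell,\ldots,\ell\}$ and on each even edge incident to a node in $\{-\ell,\ldots,\ell\}\cup\{\ell{+}1,\ldots,k\}$; the outermost of these (the odd edges near the left endpoint and the even edges near the right endpoint, in particular the edge $2k$ at node $k$) are one-sided and active in every block, whereas the ``middle'' bundle of both-parity edges gains one node on each side every time $\ell$ increments. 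Thus $\sigma$ behaves exactly as the schedule of~\cite{irani+leung.conflicts.03}: in each round it first serves the just-arrived requests on the one-sided edges --- which therefore never build up a backlog --- and then, among the remaining ``middle'' requests, commits the entire round either to odd or to even edges, choosing the parity of the oldest request still incident to $i$ whenever $i$ is occupied, and otherwise the parity of the globally oldest pending request.

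For this $\sigma$, the two quantitative assertions --- ``maximum response time $C$'' and ``$i$ has no load in round $T+1$'' --- are together precisely the content of Lemma~3.3 of~\cite{irani+leung.conflicts.03}. Because $\ell$ advances only once per $C$ rounds, the set of active middle edges is constant within each block, and an invariant propagated block by block bounds the age of the oldest pending request on every edge by $C$, while simultaneously ensuring that the unlabelled node on which the subroutine stops --- by the design of Input Sequence~\ref{alg:adversarial_subroutine} the node where Alg's unavoidable backlog has concentrated, whose single active incident edge $\sigma$ has been clearing each round --- is empty entering round $T+1$. The third assertion is then immediate: after round $T$ no fresh demand arrives, so $\sigma$ just keeps alternating ``serve all pending odd-edge requests'' and ``serve all pending even-edge requests''; every such round touches edges of only one parity, and the residual backlog, at most $C$ rounds' worth, drains within $C$ further rounds.

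The real work, and the only genuine obstacle, is transcribing the invariant behind Lemma~3.3 of~\cite{irani+leung.conflicts.03} into our notation and checking two things with it: that the once-per-$C$-rounds advance of $\ell$ lets $\sigma$'s oldest-pending chase keep every request within age $C$ even while it gives priority to $i$, and that the termination test on line~\ref{line:terminate} of Input Sequence~\ref{alg:adversarial_subroutine} fires at a round where the unlabelled node of maximum excess load in Alg has a single active incident edge that $\sigma$ has already cleared. Everything else --- parity-independence of the chain, one-sided edges never backing up, and the monochromatic drain after round $T$ --- is routine once these are in hand.
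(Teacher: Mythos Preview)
The paper does not actually supply a proof of this lemma: it is stated as an immediate corollary to Lemma~3.3 of~\cite{irani+leung.conflicts.03} and left at that. (The LaTeX source contains a commented-out description of the schedule $\sigma$, but this is suppressed in the compiled paper.) Your proposal therefore goes further than the paper does, by sketching how the transcription from~\cite{irani+leung.conflicts.03} would proceed, and in spirit it matches the suppressed description: serve the one-sided edges immediately, then commit the remainder of each round to one parity chosen by the oldest pending request on $i$ (falling back to the globally oldest when $i$ is clear).

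One imprecision worth flagging: you assert that the returned unlabelled node $i$ has a ``single active incident edge.'' Every unlabelled node of the chain sits between one odd-labelled and one even-labelled edge, and depending on where $i$ falls relative to the current value of $\ell$, both of its incident edges may be receiving requests (this happens whenever $i$ lies in the middle segment $\{-\ell,\ldots,\ell\}$). The mechanism that keeps $i$ empty in $\sigma$ is not that only one edge is active, but that $\sigma$ explicitly prioritises the parity of the oldest pending job on $i$ in every round, so that $i$'s backlog is driven down while the other parity waits. This is exactly what the suppressed description in the paper says, and it is the key point you should invoke rather than the ``single active edge'' claim. With that correction your sketch is faithful to the intended argument.
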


We now define the adversarial input as Input Sequence~\ref{alg:adversarial_input}, which is parametrized by integers $K$ and $C$, and uses Input Sequence~\ref{alg:adversarial_subroutine} as a subroutine.  The sequence has $KC$ iterations.  In each iteration, the sequence creates an instance of Input Sequence~\ref{alg:adversarial_subroutine} in order to build up load on one node $i$ in Alg. It then allows Adv to empty any load it has remaining while maintaining the load on $i$ in Alg. Finally, the sequence spreads the load in Alg to all the odd edges to prepare for the next iteration.

{\SetAlgorithmName{Input Sequence}{input}{List of Input Sequences}
\begin{algorithm2e}
\KwData{parameters $K$, $C$}
\For{$k= 0$ to $K$}{
    \For{$c = 0$ to $C$ \label{line:cfor}}{
        $i \leftarrow$ result of running Input Sequence~\ref{alg:adversarial_subroutine} on $k, c$\;
        \tcp{Adv schedules the jobs from Input Sequence~\ref{alg:adversarial_subroutine} according to Lemma~\ref{lem:sigma}}
        \For{$C$ iterations \label{line:emptysigmastar}}{
            \uIf{Adv schedules odd edges in the next round}{
                introduce a job on the odd edge adjacent to $i$
            }
            \Else{introduce a job on the even edge adjacent to $i$}
            \tcp{Adv executes the new job immediately}
        }
        \For{$2K+2$ iterations \label{line:spreadexcess}}{
            introduce one request on every even numbered edge for $Cn$ rounds \label{line:evenspread}\;
            introduce one request on every odd numbered edge for $Cn$ rounds \label{line:oddspread}\;
            \tcp{Adv executes each new job as it arrives}
        }
    }
}
\caption{Adversarial Input}
\label{alg:adversarial_input}
\end{algorithm2e}
}

\begin{lemma}
On the input generated by Input Sequence~\ref{alg:adversarial_input}, Alg has maximum response time $C \cdot (n-5)/4$.
\label{lem:algmaxresponse}
\end{lemma}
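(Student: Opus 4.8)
The plan is to prove the exact value by induction on the $(K+1)(C+1)$ iterations of the outer double loop of Input Sequence~\ref{alg:adversarial_input}, carrying the invariant that, at the moment the subroutine call (Input Sequence~\ref{alg:adversarial_subroutine}) begins in the iteration with parameters $(k,c)$, Adv has no pending load at all, while Alg carries pending load at least $(Ck+c)/2$ on every odd edge of the chain. The base case is the iteration $(0,0)$: nothing has yet arrived, so both conditions hold trivially, and in particular the hypothesis of Lemma~\ref{lem:lowerbound_increase} (load $0/2$ on each odd edge) is met. Keeping the bound on Alg's load as ``at least'' rather than equality is important, since it lets the induction absorb the slight over/undershoot at the boundaries between successive values of $k$; extra load only helps the adversary.

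For the inductive step, assume the invariant at the start of iteration $(k,c)$. First, since Adv is empty, Lemma~\ref{lem:sigma} applies to the jobs produced by Input Sequence~\ref{alg:adversarial_subroutine} in isolation: Adv can run a schedule $\sigma$ of maximum response time $C$ that leaves the returned unlabelled node $i$ empty in the round after this input stops and, from then on, schedules only even edges or only odd edges in each round; and Lemma~\ref{lem:lowerbound_increase} guarantees that by the time the subroutine terminates Alg has excess load at least $(Ck+c+1)/2$ on $i$. Next, the $C$ rounds of the ``empty $\sigma$'' phase (line~\ref{line:emptysigmastar}): in each round a job arrives on the incident edge of $i$ of the parity $\sigma$ is about to schedule, so Adv clears it immediately and stays empty, whereas Alg, already congested on $i$, cannot simultaneously absorb the parity-matched arrivals and work down its backlog, so its load on $i$ remains at least $(Ck+c+1)/2$. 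Finally, the ``spread excess'' phase (lines~\ref{line:spreadexcess}--\ref{line:oddspread}): $2K+2$ waves, each made of $Cn$ rounds with a request on every even edge followed by $Cn$ rounds with a request on every odd edge. Since the even edges form an independent set, and likewise the odd edges, Adv clears every arrival in the round it appears and remains empty; Alg, which must split the unit capacity of each node between its single even and single odd incident edge, is in each round forced to choose between keeping up with the current-parity arrivals and reducing its backlog. A leveling (discrete-diffusion) argument along the path then shows that $2K+2$ alternations suffice to disperse any concentration of backlog -- in particular the excess at $i$ -- into load at least $(Ck+c+1)/2$ on each odd edge, which is exactly the invariant for the next iteration ($(k,c+1)$ or $(k+1,0)$).

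To read off the bound, apply the invariant through the last iteration, $(k,c)=(K,C)$: by Lemma~\ref{lem:lowerbound_increase} the subroutine there accumulates load at least $(CK+C+1)/2$ on edge $2K$ while forcing Alg to spend at least $1/2$ unit per round on the odd edges of the active path, and tracking which request on edge $2K$ still carries unscheduled fractional load -- equivalently, the request that is oldest among those Alg cannot have cleared by the round the subroutine terminates -- together with the fact that this load must still be drained on a unit-capacity node, pins Alg's maximum response time at exactly $C\cdot(n-5)/4$ (recall $n=4K+5$, so $(n-5)/4=K$). The step I expect to be the main obstacle is the leveling argument inside the spread phase: proving that $2K+2$ alternating all-even/all-odd waves genuinely force Alg to equilibrate its pending load across all odd edges, regardless of which deterministic online strategy Alg uses. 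This is where the chain length $n=4K+5$ is essential -- a shorter chain would not force the excess out to every odd edge, while a longer spreading phase is harmless -- and it is precisely the claim that makes the invariant self-sustaining from one iteration to the next. The remaining bookkeeping (matching the parametrization to the value $C(n-5)/4$) is routine and parallels the accounting in the analogous theorem of Irani and Leung~\cite{irani+leung.conflicts.03}.
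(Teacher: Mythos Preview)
Your overall plan---induction on the $(k,c)$ iterations carrying the invariant that Alg has load at least $(Ck+c)/2$ on every odd edge at the start of each subroutine call---is exactly the paper's approach. The gap is in the spreading phase, which you correctly identify as the crux but leave as an unproven ``leveling (discrete-diffusion)'' argument. The paper's mechanism is not diffusion but a response-time dichotomy: during a single $Cn$-round wave of all-even arrivals, any job already pending on a loaded odd edge at the start of the wave is either served during the wave or accrues $Cn$ additional waiting time. So either Alg already has maximum response time $\Omega(Cn)$, or Alg must completely clear each loaded odd edge during the wave; and since the shared unlabelled node can process at most $Cn$ jobs over $Cn$ rounds while $Cn$ new even-edge jobs arrive there, clearing the odd backlog forces that same backlog onto the adjacent even edge. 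One even-wave followed by one odd-wave therefore enlarges the set of consecutive loaded odd edges by one, and $2K+2$ iterations cover all $2K+2$ odd edges. This threshold argument is the idea you are missing; a pure load-conservation or diffusion argument does not force the backlog to reach \emph{every} odd edge, because without the $Cn$-response-time threat Alg could simply sit on its odd-edge backlog forever and keep up with the even arrivals.

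Two smaller points. Carrying Adv's state (``Adv has no pending load'') in the invariant is unnecessary here; that bookkeeping belongs to Lemma~\ref{lem:sigmastar}, not this lemma. And your final extraction---tracking the oldest request on edge $2K$ after the $(K,C)$ subroutine call---does not pin the response time at $CK$: a backlog of $(CK+C+1)/2$ on a single unit-capacity edge only yields response time of that order, roughly $CK/2$. In the paper the bound is read off from the dichotomy together with the final-iteration invariant and $K=(n-5)/4$, not from the age of any single request.
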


\begin{proof}
We prove the lemma by showing that, for any given values of $k \le K$ and $c \le C$, at the start of iteration $k \cdot c$ of the line~\ref{line:cfor} for-loop, Alg has $(Ck + c)/2$ pending load on all odd edges. The claim is proved by induction on $k$ and $c$. The base case where $k = c = 0$ is trivial. 

Suppose that at the start of iteration $k \cdot c$, Alg has load $(Ck+c)/2$ pending on all odd edges. After running Input Sequence~\ref{alg:adversarial_subroutine} on input $k$, $c$, Alg has load $(Ck + c + 1)/2$ on some node $i$, by Lemma~\ref{lem:lowerbound_increase}. So, we must establish that the remainder of Input Sequence~\ref{alg:adversarial_input} spreads that load to all odd edges. We note that the line~\ref{line:spreadexcess} for-loop does not affect the load on $i$ in Alg: since one edge arrives on $i$ in every round, at the end of this for-loop there remains $(Ck + c + 1)/2$ load on $i$ as long as Alg keeps $i$ executing jobs at capacity in each round. 

We show, by induction on $t$, that after $t$ iterations of the line~\ref{line:spreadexcess} for-loop, either Alg has maximum response time of $\Omega(Cn)$ or there are at least $t$ consecutive odd edges with load $(Ck + c + 1)/2$ in Alg. The claim is trivial for $t = 0$, so assume the claim holds after $t \ge 0$ iterations. In this case, when line~\ref{line:evenspread} is executed, Alg can avoid having maximum response time $Cn$ only by scheduling all jobs on the previously loaded odd edges. This places load $(Ck + c + 1)/2$ load on at least $t + 1$ consecutive even edges in Alg. Similar reasoning shows that after executing line~\ref{line:oddspread}, at least $t+1$ odd edges have load $(Ck + c + 1)/2$ load in Alg after executing line~\ref{line:oddspread}. 

The lemma then follows from the fact that there are $2K + 2$ odd edges and setting $K = (n-5)/4$.
\end{proof}

\begin{lemma}[immediate corollary to Theorem 3.1 in \cite{irani+leung.conflicts.03}]
Adv has maximum response time $C$.
\label{lem:sigmastar}
\end{lemma}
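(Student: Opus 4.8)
The plan is to show directly that the schedule Adv produced by Input Sequence~\ref{alg:adversarial_input} completes every request within $C$ rounds of its release. It suffices to handle separately the three kinds of requests arising in each of the $KC$ iterations: (i) those generated by the call to Input Sequence~\ref{alg:adversarial_subroutine}; (ii) the $C$ requests introduced next to the node $i$ returned by that call (line~\ref{line:emptysigmastar}); and (iii) the spreading requests on alternating parity classes (lines~\ref{line:evenspread}--\ref{line:oddspread}). For type~(i), Adv simply follows the schedule $\sigma$ furnished by Lemma~\ref{lem:sigma}, whose maximum response time is exactly $C$; this is the part inherited directly from Theorem~3.1 of~\cite{irani+leung.conflicts.03}.

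For type~(ii), I would invoke the structural part of Lemma~\ref{lem:sigma}: once Input Sequence~\ref{alg:adversarial_subroutine} terminates in round $T$, the schedule $\sigma$ serves requests on even edges only or on odd edges only in every later round, and node $i$ carries no $\sigma$-load from round $T+1$ onward. Since line~\ref{line:emptysigmastar} places the new request on the edge incident to $i$ whose parity matches the class Adv is about to serve, Adv can append it to that round: its only endpoints are $i$ (empty) and the far endpoint $i'$ of that edge, whose other incident edge has the opposite parity and hence is idle in that round, so unit capacities hold at both endpoints; thus every type-(ii) request has response time~$1$. For type~(iii), the key observation is that in a path the edges of any fixed parity form a matching, so under unit capacities Adv can serve one request on every even edge (resp.\ every odd edge) in a single round; each spreading request is therefore completed in its arrival round, again with response time~$1$, as long as Adv carries no backlog into line~\ref{line:evenspread}.

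It then remains to check the hand-offs between phases and iterations. The last type-(i) request of an iteration arrives in round $T$, so $\sigma$ finishes it by round $T+C$; since the loop of line~\ref{line:emptysigmastar} runs for exactly $C$ rounds, namely $T+1,\dots,T+C$, and its requests are served immediately, Adv is idle when line~\ref{line:evenspread} is first reached, and, clearing each parity class as it arrives, Adv is idle again when the next iteration begins, so Lemma~\ref{lem:sigma} applies verbatim to the next call of Input Sequence~\ref{alg:adversarial_subroutine}. Since no backlog ever crosses a phase or iteration boundary and the three cases cover every request, the maximum response time of Adv is $\max\{C,1\}=C$. The main obstacle is precisely this bookkeeping: verifying that the ``only-even-or-only-odd after round $T$'' structure of $\sigma$ leaves exactly enough slack to append the type-(ii) requests without violating any unit capacity, and that no fractional load survives from one subroutine call (or spreading block) into the next. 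Modulo these checks the statement is an immediate corollary of Lemma~\ref{lem:sigma}, hence of Theorem~3.1 in~\cite{irani+leung.conflicts.03}.
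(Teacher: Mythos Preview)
Your proposal is correct and follows essentially the same approach as the paper. The paper in fact gives only a brief sketch (deferring the details to Theorem~3.1 of \cite{irani+leung.conflicts.03}): after Input Sequence~\ref{alg:adversarial_subroutine} exits, Adv has no load on $i$; the line~\ref{line:emptysigmastar} loop lets Adv drain $\sigma$ while absorbing the new requests at $i$; and the line~\ref{line:spreadexcess} requests are served in the round they arrive. Your three-case decomposition (types (i)--(iii)) with the hand-off bookkeeping is exactly a fleshed-out version of this sketch, and your capacity check for the type-(ii) requests---using that $i$ carries no $\sigma$-load after round $T$ (since no further subroutine jobs arrive and $i$ was empty at $T+1$) and that the far endpoint's other edge has the wrong parity---is the right way to justify the algorithm's comment ``Adv executes the new job immediately.''
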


The proof of Lemma~\ref{lem:sigmastar} (following Theorem 3.1 in \cite{irani+leung.conflicts.03}) uses properties of Adv that we have excluded for clarity. Essentially, each time Input Sequence~\ref{alg:adversarial_subroutine} exits, Adv ensures that there is no load on Alg's overloaded node $i$. The for-loop in line~\ref{line:emptysigmastar} of Input Sequence~\ref{alg:adversarial_input} then allows Adv to schedule all remaining jobs while maintaining Alg's excess load on $i$. Adv can then easily handle the input introduced by the for-loop in line 8 of Input Sequence~\ref{alg:adversarial_input} because all these requests can be handled in the round in which they arrive.

Theorem~\ref{thm:lower} is entailed by Lemmas~\ref{lem:algmaxresponse} and \ref{lem:sigmastar}.

\section{Upper Bounds for Maximum Response Time}
\label{sec:upper}

In this section, we present resource augmented algorithms for \ecfs\ that are competitive for the maximum response time objective. Our use of resource augmentation is justified by the lower bound presented in Section~\ref{sec:lower}. Since \ecfs\ schedules are constrained by node capacities, we take resource augmentation to indicate increased node capacity. 

Our upper bounds for maximum response time use the \textit{interval lower bound}. Informally, we consider every interval of input rounds over the course of a given \ecfs\ instance. If the load placed on any node during the interval is greater than the length of the interval, this provides a lower bound on the maximum response time for the entire instance. For example, if ten unit jobs arrive on a unit capacity port in a single round, then the best response time you can achieve is by executing one of these edges in the round it arrives, and the nine others in the immediate following rounds. Similarly, if a hundred jobs arrive on one node over thirty rounds, then the best response time one can hope for is seventy.

Formally, the interval lower bound $L$ for a given \ecfs\ instance is defined as follows. 
\begin{equation}
L =  \max_{i,t_1 \le t_2} \left\{ \ \frac{1}{c_i} \cdot \sum_{t=t_1}^{t_2} \sum_{j \ni i: r_j = t} d_j - (t_2 - t_1 + 1)   \right\}  + 1
\label{eq:interval_lower_bound}
\end{equation} 
In general, we are interested only in \ecfs\ instances for which $L \ge 2$. If $L \le 1$, this implies that in every round $t$ the load that arrives on $i$ is no more than $c_i$. In this case, each job can be executed in the round in which it arrives and the problem is easily solvable. 

Our first two lemmas establish fairly simple properties of the interval lower bound. The first shows that it lower bounds the optimal maximum response time. The second shows that for any interval of rounds, the input load on any port in that interval is no more than the length of the interval plus the interval lower bound. 

\begin{lemma}
For a given instance of \ecfs\ with interval lower bound $L$, the optimal maximum response time is at least $L$.
\label{lem:interval_lowerbound}
\end{lemma}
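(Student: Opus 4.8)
The plan is to exhibit, for an arbitrary feasible (non-augmented) schedule, a single node and a single interval of rounds that together force the claimed amount of delay. Let $(i^\ast, t_1, t_2)$ be a triple attaining the maximum in the definition~\eqref{eq:interval_lower_bound} of $L$, and write $D = \sum_{t=t_1}^{t_2}\sum_{j \ni i^\ast : r_j = t} d_j$ for the total demand of jobs incident to $i^\ast$ that are released during $[t_1,t_2]$, so that $L = D/c_{i^\ast} - (t_2 - t_1 + 1) + 1$. Fix any feasible schedule using capacities $c_i$ (i.e.\ no resource augmentation) and let $R$ denote its maximum response time; the goal is to show $R \ge L$.

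First I would localize the rounds in which work on these particular jobs can be performed. Every job $j$ with $i^\ast \in j$ and $r_j \in [t_1,t_2]$ contributes no flow before round $r_j \ge t_1$, and, since its response time is at most $R$, it is completed by round $C_j \le r_j + R \le t_2 + R$; hence all of its execution lies in the window of rounds $t_1, t_1+1, \ldots, t_2 + R - 1$. Now bound the $i^\ast$-demand routed over this window from both sides: each such job $j$ routes at least $d_j$ units across $i^\ast$ over the schedule (a feasible schedule completes every job, $\sum_t x_{j,t} \ge 1$), so these jobs contribute at least $D$ in total; on the other hand, the per-round capacity constraint caps the $i^\ast$-demand routed in any single round at $c_{i^\ast}$. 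Since the window contains $(t_2 + R - 1) - t_1 + 1 = (t_2 - t_1) + R$ rounds, feasibility forces $c_{i^\ast}\bigl((t_2 - t_1) + R\bigr) \ge D$.

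Rearranging gives $R \ge D/c_{i^\ast} - (t_2 - t_1) = D/c_{i^\ast} - (t_2 - t_1 + 1) + 1 = L$, which is the claim. The argument applies verbatim to both splittable and unsplittable jobs, since it uses only the per-round capacity bound and the fact that each completed job routes at least its full demand; the presence of other jobs incident to $i^\ast$ (for instance those released before $t_1$) only tightens the inequality. The single point requiring care is the off-by-one arising from the completion-time convention (a job finished in its release round has $C_j = r_j + 1$ and response time $1$), which is exactly what makes the relevant window have length $(t_2 - t_1) + R$ rather than $(t_2 - t_1 + 1) + R$; in the degenerate regime $L \le 1$ the statement is immediate, as any instance containing a job has maximum response time at least $1$. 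I do not expect any real obstacle beyond this bookkeeping.
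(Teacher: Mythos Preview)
Your proof is correct and follows essentially the same approach as the paper's: both use the capacity constraint at the maximizing node over a window of rounds to force a lower bound on the response time of some job released in $[t_1,t_2]$. The paper phrases it as ``at most $c_i(t_2-t_1+1)$ load is processed during $[t_1,t_2]$, so at least $\ell$ remains, requiring $\ell/c_i$ further rounds,'' whereas you phrase it contrapositively as ``all this load must fit in the $(t_2-t_1)+R$ rounds of $[t_1,\,t_2+R-1]$''; these are two readings of the same inequality, and your bookkeeping of the off-by-one is if anything more explicit than the paper's.
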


\begin{proof}
Let $\sigma$ be an arbitrary schedule and let $L$ be the interval lower bound for the given \ecfs\ instance. Fix a node $i$ and rounds $t_1 \le t_2$. The maximum load that can be executed in $\sigma$ on node $i$ in the interval $[t_1, t_2]$ is $c_i (t_2-t_1+1)$. Therefore, there is at least $\ell = \sum_{t=t_1}^{t_2} \sum_{j \ni i: r_j = t} d_j - c_i(t_2 - t_1 +1)$ load remaining on $i$ after time $t_2$. So the last job $j$ from the interval is executed no earlier than round $t_2 +  \ell/c_i$. Since $r_j \le t_2$, this yields a maximum response time of at least $\ell/c_i + 1 \le L$.
\end{proof}

\junk{
\begin{lemma}
Suppose the interval lower bound for a given instanc is $L$. Then for any fixed node $i$ and rounds $t_1 \le t_2$, we have that $\sum_{t=t_1}^{t_2} \sum_{j \ni i: r_j = t} d_j \le c_i(t_2 - t_1 +1   + L)$. \comment{this lemma is literally just a rearranging of terms in the definition...do we want to just get rid of it? doesn't look like we reference it anywhere}
\end{lemma}

\begin{proof}
Suppose the lemma were false for some node $i$ and pair of rounds $t_1 \le t_2$. Then, by supposition, we have that $\sum_{t=t_1}^{t_2} \sum_{j \ni i: r_j = t} d_j > c_i(t_2 - t_1 +1 )  + L$. However, this entails
\begin{align*}
    L &\ge \frac{1}{c_i} \sum_{t=t_1}^{t_2} \sum_{j \ni i: r_j = t} d_j - (t_2- t_1 + 1) &&\text{by definition of } L \\
    &> \frac{ c_i(t_2 - t_1 +1  + L)}{c_i} - (t_2 - t_1 + 1) = L
\end{align*}
which is a contradiction.
\end{proof}
}

We note that the interval lower bound is local and applies equally well for the special case where jobs require the resources of a single node rather than a pair of nodes.  A natural question arises about the strength of the interval lower bound.  Our final result regarding the interval lower bound establishes that, in some cases, there is a large gap between it and the optimal maximum response time. This gap arises from the fact that the interval lower bound is merely counting the load that arrives on a given port in an interval, and does not account for other conflicts between different requests.  

\begin{lemma}
There exists an instance with interval lower bound $L$ for which the optimal maximum response time is $\Omega(\sqrt{m} \cdot L)$, where $m$ is the number of requests.
\label{thm:interval_gap}
\end{lemma}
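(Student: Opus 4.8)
The plan is to construct an \ecfs\ instance on which the interval lower bound $L$ is small (say, a constant or $O(1)$), while every schedule — splitting or not — is forced to incur maximum response time $\Omega(\sqrt{m}\cdot L)$ because of \emph{conflicts} that the interval bound is blind to. The natural source of such conflicts in \ecfs\ is a high-degree structure: many requests sharing a common node must be serialized even though no single node receives a burst of load that the interval bound can detect. Concretely, I would take a ``star'' or ``near-regular bipartite'' gadget: pick a set of roughly $d$ nodes on one side and $d$ nodes on the other, with each node having capacity $1$, and arrange for the requests to form a $d$-regular bipartite graph (so $m \approx d^2$), with all releases spread out slowly — one request per node per $d$ rounds, say — so that in any interval $[t_1,t_2]$ and at any single node $i$, the total load released is at most $\lceil (t_2-t_1+1)/d\rceil + O(1)$, keeping $L = O(1)$. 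The point is that a $d$-regular bipartite graph has chromatic index exactly $d$, so in each round only a matching can be scheduled, and the adversary feeds requests fast enough relative to the matching throughput that a backlog of $\Omega(d) = \Omega(\sqrt{m})$ accumulates and cannot be cleared.

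The key steps, in order: (1) Fix the bipartite $d$-regular instance above and compute $L$ explicitly by bounding, for every node $i$ and every interval, the released load over $c_i$ minus the interval length; with the slow release schedule this is $O(1)$, so $L = \Theta(1)$. (2) Lower-bound the optimal maximum response time: in any round, the set of requests executed (even fractionally, with unit capacities) corresponds to a fractional matching of the conflict line graph, so the total fractional ``mass'' of requests that can be processed in $T$ rounds across the whole gadget is at most $T \cdot (\text{size of a maximum fractional matching}) = T\cdot d$ roughly — wait, more carefully, it is at most $T$ times the fractional matching number, which for the $d$-regular bipartite graph is $d$. Since $m \approx d^2$ requests must all be released within the first $O(dm/d) = O(md)$... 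I would instead argue locally: focus on a window during which $\Theta(d^2)$ requests have been released but only $O(d \cdot (\text{window length}))$ of total demand can have been served, and choose the window length so that the residual backlog on some node is $\Omega(d)$; since that node has capacity $1$, clearing the backlog takes $\Omega(d)$ further rounds, giving a request whose response time is $\Omega(d) = \Omega(\sqrt{m})$. (3) Combine: optimal max response time $= \Omega(d) = \Omega(\sqrt{m}) = \Omega(\sqrt{m}\cdot L)$ since $L = \Theta(1)$.

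The main obstacle I anticipate is step (2): making the counting argument that a \emph{splitting} schedule is genuinely stuck. Unlike the non-splitting case where a round executes an actual matching, a splitting schedule in each round executes a fractional matching of the line graph of the requests, and one must be careful that the total fractional throughput per round is bounded by the fractional matching number (which, by König / total unimodularity for bipartite $d$-regular graphs, equals $d$), not something larger. The cleanest route is a potential/charging argument over a carefully chosen interval $[t_1,t_2]$: total demand released by $t_2$ that is incident to the gadget is $\Theta(d^2)$, total demand that \emph{can} be executed by $t_2$ is at most $d\cdot(t_2 - t_1 + 1) + (\text{backlog at } t_1)$; picking $t_1$ at the start of the gadget's activity and $t_2$ so that $t_2 - t_1 + 1 = \Theta(d)$ leaves $\Theta(d^2) - \Theta(d^2) $ unscheduled — so I need the release to outpace throughput, which forces each of the $d$ rounds' worth of a ``layer'' of $d$ requests to arrive faster than one matching per round, i.e. I should release a full layer (one request on each of the $d$ gadget nodes, forming a perfect matching is \emph{not} enough) — the right choice is to release, over each block of $d$ rounds, \emph{all} $d^2$ requests of a fresh $d$-regular gadget; then over $d$ rounds only $\Theta(d^2/d)\cdot d = \Theta(d^2)$... this is delicate and getting the constants to leave an $\Omega(d)$ surplus on a single node (not just in aggregate) is where the real work lies. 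I would resolve it by an averaging argument: if the aggregate surplus is $\Omega(d^2)$ spread over $2d$ nodes, some node carries $\Omega(d)$ surplus, and since its capacity is $1$ its last pending request finishes $\Omega(d)$ rounds late, completing the proof.
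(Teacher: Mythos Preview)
Your approach has a genuine gap: the aggregate throughput--versus--release counting you propose in step~(2) is precisely what the interval lower bound already measures, so it cannot produce a separation. Concretely, suppose you release a $d$-regular bipartite graph on $2d$ unit-capacity nodes while keeping $L=O(1)$. Keeping $L=O(1)$ forces you to release at most $O(1)$ requests per node per round on average over every interval; in particular the total demand released over any window of length $\ell$ is at most $\sum_i c_i(\ell + L)/2 = d\ell + dL$. But the per-round throughput of a fractional matching on $2d$ unit-capacity nodes is exactly $d$, so the schedulable demand over the same window is $d\ell$. The aggregate surplus is therefore at most $dL = O(d)$, not $\Omega(d^2)$, and averaging over $2d$ nodes gives $O(L)$ per node, not $\Omega(d)$. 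Indeed, for the cleanest version of your construction (release one perfect matching per round for $d$ rounds), the optimal schedule simply executes each matching in the round it arrives, achieving response time~$1$. The ``high-degree $\Rightarrow$ serialization'' intuition is exactly the kind of local congestion the interval bound already captures; it cannot be the source of the gap.

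The paper's construction is completely different in character: it uses only four nodes and unit jobs, and the gap comes from a \emph{temporal} obstruction the interval bound is blind to. In each subinstance of $2C$ rounds, $2C-1$ copies of an $(a,b)$ request arrive, plus one $(b,c)$ request in round~$1$ and one $(a,d)$ request in round~$C$. The jobs $(b,c)$ and $(a,d)$ do not conflict with each other, but each conflicts with $(a,b)$; crucially, because they are released $C$ rounds apart, any schedule with maximum response time below $C$ must serve them in \emph{distinct} rounds, each of which blocks an $(a,b)$ slot. This leaves only $2C-2$ usable rounds for $2C-1$ $(a,b)$ jobs per subinstance, so one $(a,b)$ job leaks into the backlog. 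Chaining $C$ subinstances accumulates $C$ leaked jobs, forcing response time $\ge C$, while $L=2$ throughout. The key idea you are missing is this release-time separation of mutually non-conflicting jobs that nonetheless each conflict with the bulk workload.
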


\begin{proof}
The gap instance is constructed over four unit capacity nodes $a,b,c,d$ using unit demand jobs. Fix an arbitrary integer $C \ge 2$. A subinstance is composed of $2C$ consecutive rounds. In the last round no jobs arrive. In all other rounds of the subinstance, a request arrives on nodes $a$ and $b$. In the first round, an additional job arrives on $b$ and $c$, and in the $C$\textsuperscript{th} round an additional job arrives on nodes $a$ and $d$. The gap instance is constructed from $C$ consecutive subinstances. Figure \ref{fig:interval_gap} shows the construction.

\begin{figure}
    \centering
    \ifAPOCS \includegraphics[]{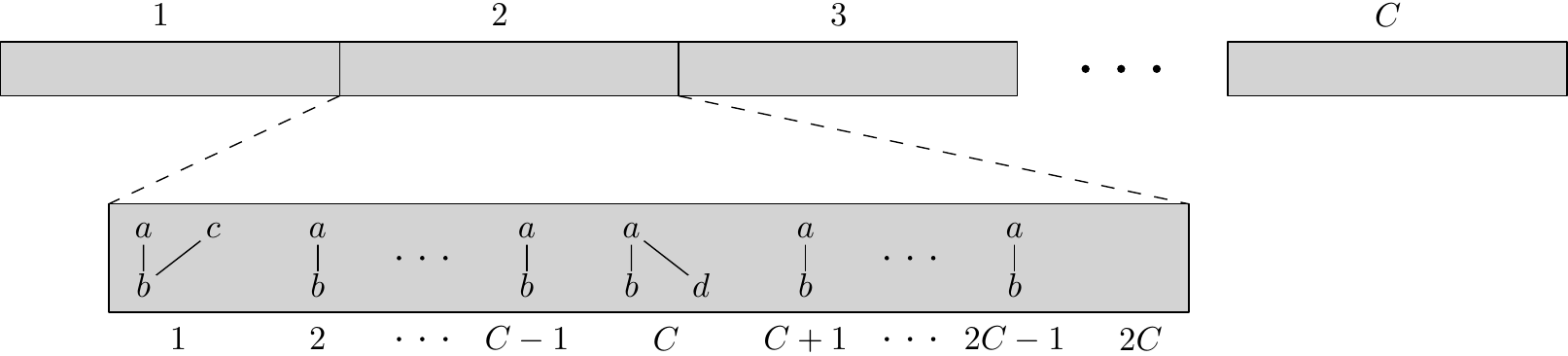} 
    \caption{Input sequence for interval lower bound gap instance. The upper sequence shows the construction of the entire instance, consisting of $C$ subinstances. The lower sequence shows the construction of a single subinstance, consisting of $2C$ rounds.}
    \fi
    \ifNoFormat \includegraphics[width=\columnwidth]{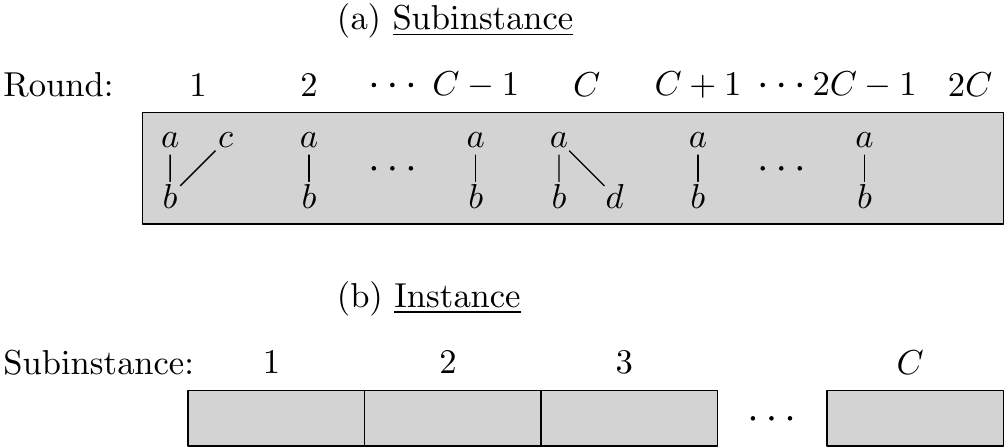} 
    \caption{Input sequence for interval lower bound gap instance. (a) shows the construction of a single subinstance, consisting of $2C$ rounds. (b) shows the construction of the entire instance as a chain of subinstances.}
    \fi
    \label{fig:interval_gap}
\end{figure}

Considering the constructed instance, we prove the theorem via the following two claims:  (1) the interval lower bound is $2$ and (2) the optimal maximum response is at least $C$. Claims (1) and (2) are sufficient to prove the theorem since the total number of requests is $2C^2+C$: $2C+1$ requests per subinstance and $C$ subinstances.

We first prove (1). Nodes $a$ and $b$ are the only nodes that ever have more than one job arrive in a single round, so we need not consider nodes $c$ and $d$ when calculating the interval lower bound. We focus our attention on node $a$ (the reasoning for $b$ is similar). Note that, between any two rounds in which more than one job is introduced on $a$, there is at least one round in which no jobs are introduced on $a$. This entails that for any interval of rounds, the number of jobs that arrive on $a$ is no more the length of the interval plus 1. This proves claim (1).

We now prove claim (2). Let $\sigma$ be a schedule with maximum response time strictly less than $C$. For a fixed subinstance, there is exactly one job $j_{a,d}$ on nodes $a$ and $d$ and exactly one job $j_{b,c}$ on nodes $b$ and $c$. In order for $\sigma$ to avoid response time $C$, it must execute job $j_{b,c}$ before the release of job $j_{a,d}$. Similarly, $\sigma$ must execute job $j_{a,d}$ before the final round of the subinstance where no jobs arrive. Note that, in any round where $j_{b,c}$ or $j_{a,d}$ is executed, $\sigma$ cannot execute any job that requires both nodes $a$ and $b$. This entails that, in each subinstance, $\sigma$ can schedule no more than $2C-2$ of such jobs. Therefore, after $C$ subinstances, there are at least $C$ jobs pending that require both nodes $a$ and $b$. The last of these jobs to be executed will have response time at least $C$, so we have derived a contradiction. This proves claim (2).
\end{proof}

\subsection{General Demands and Capacities.}

We now turn to proving upper bounds on the competitive ratio for maximum response times with resource augmentation. In this section, we present an algorithm that allows for general capacities and demands, and uses only a small amount of augmentation. The algorithm is given as Algorithm~\ref{alg:propalloc}. We define $J(i,t)$ to be the set of pending jobs on $i$ in round $t$. 

\begin{algorithm2e}
\For{each round $t = 1, 2, \ldots$}{
    \ForAll{requests $j$ pending in round $t$}{
        $f_t(j) = \min_{i \in j} \{ c_i / \sum_{j' \in J(i,t)} d_{j'} \} $\;
        execute a $(1+\varepsilon) \cdot f_t(j)$ fraction of $j$
    }
}
\caption{Proportional Allocation}
\label{alg:propalloc}
\end{algorithm2e}

\ifNoFormat
\propalloc*
\fi 

\ifAPOCS
\begin{thm}
 
\end{thm}
\fi

\begin{proof}
We prove the theorem via two claims:
\begin{enumerate*}[label=(\alph*)]
    \item in every round there is no more than load $c_i (1+\varepsilon)$ on any node $i$ and%
    \label{max:load}
    \item for all rounds $T$, all requests released prior to round $T$ are scheduled with response time at most $\lceil L/\varepsilon \rceil$. 
    \label{max:response}
\end{enumerate*}
The proof of \ref{max:load} is straightforward. For a fixed $i$ and $t$, the total scheduled load on node $i$ in round $t$ is at most
\[ \sum_{j \in J(i,t)} \frac{c_i \cdot (1+\varepsilon)}{\sum_{j' \in J(i,t)} d_{j'}} \cdot d_j = c_i \cdot (1 + \varepsilon). \]

We prove \ref{max:response} by induction on $T$, letting $\hat{L} = \ceil{L/\varepsilon}$. The claim is trivial for $T=1$ since no requests are released prior to round 1. Consider a fixed job $j^*$ released on nodes $i_1$ and $i_2$ in round $T$. We show that the total pending load on $i_1$ in any round $t \in \{T, T+1, \ldots, T+\hat{L} -1 \}$ is at most $c_{i_1}(\hat{L} + L)$. For any $\ell \in \{0,1,\ldots, \hat{L}-1\}$, the induction hypothesis entails that each request pending on $i_1$ in round $T + \ell$ is released in some round $t \in \{ (T+\ell) - (\hat{L} -1) , (T+\ell) - (\hat{L} - 2), \ldots, T+\ell\}$. By definition of the interval lower bound, the total load placed on $i_1$ in the rounds from $(T+\ell)-(\hat{L}-1)$ to $T+\ell$ is at most $c_{i_1}$ times the length of the interval plus $L$, or $c_{i_1}(\hat{L} + L)$. Similar reasoning shows that the total load on $i_2$ in any round $t \in \{T, T+1, \ldots, T+\hat{L}-1 \}$ is at most $c_{i_2}(\hat{L} + L)$. For $t \in \{T, T+1, \ldots, T+\hat{L}-1\}$, let $i[t]$ be the node $i \in \{i_1, i_2\}$ such that $f_t(j) = c_{i}/\sum_{j \in J(i,t)} d_j$. By definition of the algorithm we have that the total fraction of $j$ scheduled in rounds $T, T+1, \ldots, T+\hat{L}-1$ is 
\begin{align*}
    \sum_{t=T}^{T + \hat{L}-1} (1 + \varepsilon) f_{t}(j) \ge \sum_{t=T}^{T + \hat{L}-1}  \frac{c_{i[t]}(1 + \varepsilon)}{c_{i[t]}(\hat{L} + L)} 
    = \frac{(\hat{L}) \cdot (1+\varepsilon) }{(\hat{L} + L) } \ge 1.
\end{align*}
Therefore, all jobs released in time $t$ are completed with response time $\lceil L/\varepsilon \rceil$. The theorem now follows from Lemma~\ref{lem:interval_lowerbound}.
\end{proof}

We make two observations regarding this result. The first is that the Proportional Allocation algorithm makes full use of the resource augmentation afforded it. In other words, a schedule produced by Proportional Allocation would not be valid under a weaker notion of resource augmentation like, e.g. that used in \cite{dinitz+moseley.reconfigurable.20}. Because that model uses a \textit{speed} notion of resource rather than \textit{capacity}, with $\alpha$ resources their algorithms are able to schedule $\alpha$ individual matchings. On the other hand, with the same resources, Proportional Allocation is able to schedule an $\alpha$-matching, which for general graphs affords our algorithm more power. 

We also observe that neither the Proportional Allocation algorithm nor proofs of its guarantees rely on the fact that each job is associated with a pair of nodes, or edge, in the underlying graph. Indeed, even in the setting where each edge is specified as an arbitrary set of nodes, or an edge in the underlying hypergraph, Proportional Allocation provides exactly the same guarantees. This surprising result extends Theorem~\ref{thm:propalloc} to more general settings, including conflict scheduling of \cite{irani+leung.conflicts.03} which can be modelled as a series of node capacitated requests in a hypergraph. 

\subsection{Unit Demands and Capacities.}

In some settings we may prefer that jobs not be split over multiple rounds if it can be avoided, or we might demand that jobs never be split. In this respect, one issue with Proportional Allocation algorithm is that, except in uninteresting cases where at most one job arrives on any node in any round, the algorithm always splits jobs. In prior work \cite{jahanjou+rajaraman+stalfa.flowswitch.20}, an LP based batching algorithm was given for \ecfs\ which performs well for the maximum response time objective and does not split jobs, although the capacity increase needed for this algorithm is quite high relative to Proportional Allocation. 

These considerations motivate the search for algorithms that avoid splitting jobs, are competitive with the optimal maximum response time, and require little resource augmentation. In this section, we present an algorithm which meets these criteria, although in order to do so we must restrict ourselves to the simplified setting where all job demands are unit and all node capacities are unit. 

Our algorithm, Batch Decomposition, is specified as Algorithm~\ref{alg:batch}. Batch Decomposition uses the notion of a 2-factor decomposition. A \textit{2-factor} of a given multigraph is a spanning subgraph that has degree at most 2. A \textit{2-factor decomposition} of a given multigraph is a set $F$ of 2-factors such that, for every edge $e$ there is some 2-factor in $F$ that contains $e$. For the remainder of this subsection, we use the terms \textit{graph} and \textit{multigraph} interchangeably. 

Batch Decomposition operates by collecting a set $S$ of pending requests and representing them as a graph. The algorithm then constructs a 2-factor decomposition on this graph and executes $k$ 2-factors in each of the subsequent rounds until the set is empty, for $k \in \{1,2\}$. The algorithm collects any jobs that arrive while the 2-factors are being scheduled into the next set $S$ to be scheduled immediately after.  

\begin{algorithm2e}
    \KwData{augmentation parameter $k \in \{1,2\}$}
    $P \leftarrow \varnothing$ \tcp{stores the set of pending jobs that are not currently being executed}
    $H \leftarrow \varnothing$ \tcp{stores the set of 2-factor subgraphs currently being executed}
    \For{rounds $t = 1, 2, \ldots$}{
        $P \leftarrow P \cup \{ j : r_j = t\}$ \\
        \If{$H = \varnothing$ \label{line:H_empty}}{
            $H \leftarrow$ 2-factor decomposition of $P$\;
            $P \leftarrow \varnothing$
        } 
        choose up to $k$ 2-factors in $H$, schedule all of their requests, and remove them from $H$ 
    }
    \caption{Batch Decomposition}
    \label{alg:batch}
\end{algorithm2e}

A useful lemma originally proved by Petersen~\cite{petersen, mulder} shows that the minimum size of a graph's 2-factor decomposition is upper bound by the maximum degree of the graph.

\begin{lemma}[Petersen's 2-Factor Theorem]
\label{lem:petersen}
For every positive integer $D$, every multigraph G with maximum degree $2D$ can be decomposed into $D$ spanning subgraphs $G_1, . . . , G_D$ with maximum degree 2. 
\end{lemma}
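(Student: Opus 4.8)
The plan is to prove the statement by the standard route of regularizing, Euler-orienting, and then reading off two-factors from a perfect-matching decomposition of an auxiliary bipartite graph. Concretely, I would proceed in three moves: (i) embed $G$ as an edge subgraph of a $2D$-regular multigraph $G'$ on a possibly larger vertex set; (ii) decompose $E(G')$ into exactly $D$ two-factors of $G'$; and (iii) restrict those two-factors to $G$ to obtain the desired $D$ spanning subgraphs of maximum degree at most $2$.

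For step (i), take two disjoint copies $G^{(1)}, G^{(2)}$ of $G$ and, for each vertex $v$, add $2D-\deg_G(v)$ parallel edges between its two copies $v^{(1)}$ and $v^{(2)}$; this is well-defined since $\deg_G(v)\le 2D$. The resulting multigraph $G'$ is $2D$-regular and contains $G$ (as the copy $G^{(1)}$), so $E(G)\subseteq E(G')$. For step (ii), work in each connected component of $G'$ separately: since each component is $2D$-regular with $2D$ even, it is Eulerian, so it admits an Eulerian circuit, and orienting every edge in its direction of traversal yields an orientation of $G'$ in which every vertex has in-degree $D$ and out-degree $D$. Form a bipartite multigraph $B$ with parts $\{v^+\}_{v\in V(G')}$ and $\{v^-\}_{v\in V(G')}$, placing one edge $u^+w^-$ for each arc $u\to w$. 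Then $B$ is $D$-regular bipartite, so by König's edge-colouring theorem (equivalently, by repeatedly extracting perfect matchings via Hall's theorem) $E(B)$ partitions into $D$ perfect matchings $M_1,\dots,M_D$. Each $M_\ell$ picks, for every vertex $v$, exactly one out-arc and one in-arc at $v$; the underlying undirected edges therefore form a spanning subgraph $F_\ell$ of $G'$ in which every vertex has degree exactly $2$, i.e.\ a two-factor, and $F_1,\dots,F_D$ partition $E(G')$.

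For step (iii), set $G_\ell = F_\ell\cap E(G)$. Because $E(G)\subseteq E(G')$ and the $F_\ell$ partition $E(G')$, the sets $G_\ell$ partition $E(G)$; and each $G_\ell$ is a spanning subgraph of $G$ with maximum degree at most $2$. Hence $\{G_1,\dots,G_D\}$ is exactly the decomposition claimed in Lemma~\ref{lem:petersen}.

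As for difficulty, this argument is entirely classical and presents no genuine obstacle; the only points that need care are the multigraph bookkeeping — keeping parallel edges (and loops, which become arcs $v\to v$ and thus edges $v^+v^-$ in $B$, handled with no extra work) straight, applying Euler's theorem componentwise rather than to $G'$ as a whole, and noting that "maximum degree $2$" in the statement is met with room to spare after restriction to $G$, with equality holding only on $G'$. In the write-up I would either invoke König's theorem as a black box or spell out the one-line Hall's-theorem induction for decomposing a $D$-regular bipartite multigraph into $D$ perfect matchings, depending on how self-contained the exposition should be.
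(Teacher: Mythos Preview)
Your argument is correct and is the standard modern proof of Petersen's theorem: regularize to a $2D$-regular multigraph, Euler-orient to balance in- and out-degrees, pass to the $D$-regular bipartite split graph, and peel off perfect matchings via K\"onig/Hall. The bookkeeping you flag (parallel edges, loops, componentwise Euler tours, and the degree drop upon restricting back to $G$) is handled correctly.

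There is nothing to compare against in the paper: the authors do not prove Lemma~\ref{lem:petersen} at all but simply cite it as a classical result due to Petersen~\cite{petersen, mulder} and then invoke it as a black box inside the proof of Theorem~\ref{thm:batch}. So your write-up would in fact supply strictly more than the paper does. If you want to match the paper's level of self-containment, a one-line citation suffices; if you want the exposition to stand alone, your three-step sketch is exactly what is needed.
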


We now use Lemma~\ref{lem:petersen} to prove Theorem~\ref{thm:batch}, which we restate below. 

\ifNoFormat
\batch*
\fi 

\ifAPOCS
\begin{thm}
 
\end{thm}
\fi 

\begin{proof}
We claim that if the number of rounds between two consecutive times the condition at line~\ref{line:H_empty} is satisfied is upper bounded by $B$, then no job has response time greater than $2B$. Consider an arbitrary job $j$ released in round $r_j$ and completed in round $C_j$. Let $T$ be the earliest round $t \ge r_j$ for which the condition at line~\ref{line:H_empty} is satisfied. By the claim and the definition of Algorithm~\ref{alg:batch}, we have that $T - r_j \le B$ and that $C_j - T \le B$. Therefore, the response time of $j$ is no more than $2B$. 

We now upper bound the number of rounds that pass between two consecutive times the condition at line~\ref{line:H_empty} is satisfied in terms of the interval lower bound $L$ (see Equation~\ref{eq:interval_lower_bound}). More formally, we define $T_{\ell}$ to be the round in which the condition is satisfied for the $\ell$\textsuperscript{th} time ($T_0 =0$). We argue by induction on $\ell$ that $T_{\ell} - T_{\ell-1} \le L/k$. The base case of $\ell = 1$ is straightforward: $T_1 -T_0 = 1 - 0 \le L/k$ for nontrivial $L \ge 2$.

For the induction step, assume that $T_{\ell}-T_{\ell-1} \le L/k$ for arbitrary $\ell \ge 1$. By definition of the algorithm, the difference $T_{\ell+1} - T_{\ell}$ is equal to $(1/k)$ times the size of the 2-factor decomposition that is constructed in round $T_{\ell}$. By definition of the interval lower bound, the maximum load that arrived on any node in the interval $\{T_{\ell} + 1, T_{\ell}+2, \ldots, T_{\ell+1}\}$ is at most $L + (T_{\ell+1} - T_{\ell}) \le L + (L/k)$ by the inductive hypothesis. This entails that the degree of any node the graph decomposed in round $T_{\ell}$ is no more than $L + (L/k)$. By Lemma~\ref{lem:petersen}, we have that the size of the 2-factor decomposition of this graph is no more than $(L + (L/k))/2$.  So, by definition of the algorithm, we have that
\[ T_{\ell + 1} - T_{\ell} \le \frac{1}{k} \cdot \frac{1}{2} \cdot \left( L + \frac{L}{k} \right) = \frac{L}{2k} + \frac{L}{2k^2} \le \frac{L}{k} \]
which completes the induction and proves the theorem. 
\end{proof}

\junk{
\bigskip
\comment{OLD PROOF}
\comment{is $m$ meant to be the number of jobs? if not should change. also, maybe say batch length? or batch time? batch size sounds like it's supposed to be the number of jobs in the batch.}
We define \textit{batch size} to be the number of consecutive rounds without entering the conditional in line 4 before entering the conditional for the $m$th time. We shall now argue that the batch size never grows bigger than $B = \floor*{L/k}$, for $k = \{1, 2\}$. We prove this by induction. The base case is trivial because the first batch size is 0. The induction hypothesis is that the $m^{th}$ batch size is at most $B$. We need to prove that the $(m+1)^{th}$ batch size is also at most $B$. We will show that the set of all requests that arrive in any interval of length at most $B$ takes at most $B$ rounds to complete. 

We will show that the batch size never grows bigger than $\floor*{L/k}$. By Lemma \ref{lem:interval_load}, the number of requests that can arrive on a given port in an interval of length $l \leq B$ is at most $B + L$. Therefore, the graph of such requests $G$ has a maximum degree of at most $\Delta = B + L$. 

First, we will consider the easier case of $k=1$: we have $B = L$ and $\Delta \leq 2L$. By Lemma~\ref{lem:petersen}, we can decompose $G$ into $L$ 2-factors. In each round, with resource augmentation $2$, the algorithm can execute $1$ 2-factor. It follows that in an interval of length at most $L$, with a resource augmentation $2k$, the algorithm can execute $L$ 2-factors. Therefore, the set of all requests that arrive in any interval of length at most $B$ takes at most $B$ rounds to complete.

For $k=2$, we have $B = \floor*{L/2}$ and $\Delta = B + L$. In each round, with resource augmentation $2k$, the algorithm can execute $2$ spanning subgraphs of maximum degree 2. Let $a \in \mathbb{N}$. We consider 4 different cases. 

Case $L = 4a$: we have $\Delta = 6a$, and therefore the number of 2-factors is $3a$. With resource augmentation 4, we can execute two 2-factors in each round, which implies that we can execute $G$ in $\ceil*{3a/2} \leq 2a = \floor*{L/2}$.

Case $L = 4a+1$: we have $\Delta = 6a+1$. By creating a fake request, we can augment $G$ to have maximum degree $\Delta+1 =6a+2 $, so that we could still apply Lemma~\ref{lem:petersen}. The number of 2-factors is $3a+1$. We can execute two 2-factors in each round; therefore we can execute $G$ in $\ceil*{(3a+1)/2} \leq 2a = \floor*{L/2}$. 

Case $L = 4a+2$: we have $\Delta = 6a+2$. The same argument of the second case holds here.

Case $L = 4a+3$: we have $\Delta = 6a+4$. Therefore the number of 2-factor subgraphs is $3a+2$. We can execute two 2-factors in each round, so we can execute $G$ in $\ceil*{(3a+2)/2} \leq 2a+1 = \floor*{L/2}$.

Since any request has to wait at most $B$ rounds after its arrival to get into the decomposition step, and since it will be completed in at most $B$ steps, we get that the maximum response time of any request is $2B = 2 \floor*{L/k} \geq 2L/k$. Moreover, by Lemma~\ref{lem:interval_lowerbound}, we know that the optimum maximum response time is at least $L$. Hence, the theorem is proven.  
}

\section{Simultaneous Approximation of Response Time Metrics}
\label{sec:simultaneous}

The algorithms presented in Section~\ref{sec:upper} seek to optimize a single objective: maximum response time. In some settings, it may be desirable to have an algorithm that optimizes several objectives simultaneously. In this section, we show that we can combine our results with prior work to achieve algorithms which are competitive for both maximum and average response time simultaneously.   As above, we study two different settings for this problem: one with general demands splittable jobs, and the other with unit demands and unsplittable jobs. 




\subsection{Splitting Schedules for General Demands.}
An easy corollary of Theorem~\ref{thm:propalloc} and a result of Dinitz-Moseley is an existence of an algorithm with resource augmentation $3 + \varepsilon$ that is $(2/\varepsilon)$-competitive for maximum response time and $(1/\varepsilon^2)$-competitive for average response time. The algorithm simply runs each individual algorithm in parallel and sums their resource augmentation.

\begin{corollary}[Corollary to Theorem~\ref{thm:propalloc} and Theorem IV.14 in \cite{dinitz+moseley.reconfigurable.20}]
There exists an algorithm with resource augmentation $3 + \varepsilon$ that is $O(1/\varepsilon^2)$-competitive for average response time, and is $2/\varepsilon$-competitive for maximum response time. 
\end{corollary}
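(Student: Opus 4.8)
The plan is the parallel-composition argument hinted at in the statement: run two known algorithms on (virtual copies of) the same online instance and overlay their schedules. Set $\delta = \varepsilon/2$ and maintain two independent schedules of the given instance: the Proportional Allocation schedule $\sigma_1$ with augmentation parameter $\delta$ (so each node $i$ carries load at most $(1+\delta)c_i$ in $\sigma_1$), and the Shortest Job First schedule $\sigma_2$ of Dinitz--Moseley run with augmentation $2+\delta$ (so each node $i$ carries load at most $(2+\delta)c_i$ in $\sigma_2$). The combined algorithm, in each round $t$, executes on every node $i$ all the flow that either $\sigma_1$ or $\sigma_2$ executes on $i$ in that round, and declares a job $j$ complete as soon as it is complete in either $\sigma_1$ or $\sigma_2$. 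Both $\sigma_1$ and $\sigma_2$ keep running on the full instance regardless of what the other has finished; over-completing a job is harmless since the model only requires $\sum_t x_{j,t}\ge 1$.

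First I would verify feasibility: in any round the total load placed on node $i$ by the combined algorithm is at most the $\sigma_1$-load plus the $\sigma_2$-load, i.e. at most $(1+\delta)c_i + (2+\delta)c_i = (3+2\delta)c_i = (3+\varepsilon)c_i$, so the combined algorithm uses resource augmentation $3+\varepsilon$. Second, since the completion time of every job $j$ in the combined schedule is $\min\{C_j^{\sigma_1}, C_j^{\sigma_2}\}$, the maximum response time of the combined schedule is at most that of $\sigma_1$ alone and its average response time is at most that of $\sigma_2$ alone. By Theorem~\ref{thm:propalloc} with parameter $\delta = \varepsilon/2$, the schedule $\sigma_1$ is $(1/\delta) = (2/\varepsilon)$-competitive for maximum response time. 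By Theorem IV.14 of \cite{dinitz+moseley.reconfigurable.20} with augmentation $\delta = \varepsilon/2$ above the threshold $2$, the schedule $\sigma_2$ is $O(1/\delta^2) = O(1/\varepsilon^2)$-competitive for the weighted sum of response times, hence for average response time, which is its unit-weight special case. Combining these three facts gives the corollary.

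The only point requiring care is the mismatch between the two notions of resource augmentation: \cite{dinitz+moseley.reconfigurable.20} uses speed augmentation while our model uses capacity augmentation. As noted in the related-work discussion, any schedule feasible under speed augmentation $2+\delta$ is also feasible under capacity augmentation $2+\delta$ here (capacity augmentation is at least as powerful), so the Dinitz--Moseley guarantee transfers directly, and no further compatibility issue arises because $\sigma_1$ and $\sigma_2$ operate on the identical online instance. I expect this translation, together with checking that average response time is subsumed by their weighted-sum bound, to be the only non-routine ingredient; the rest is the elementary overlay bound above.
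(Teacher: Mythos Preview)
Your proposal is correct and matches the paper's approach exactly: the paper's proof is the single sentence ``In each round, use $1+\varepsilon/2$ resource to execute Proportional Allocation algorithm and use $2+\varepsilon/2$ resource to execute the Shortest Job First algorithm.'' Your version simply spells out the overlay, the feasibility arithmetic $(1+\varepsilon/2)+(2+\varepsilon/2)=3+\varepsilon$, and the speed-versus-capacity compatibility that the paper leaves implicit.
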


\begin{proof}
In each round, use $1+\varepsilon/2$ resource to execute Proportional Allocation algorithm and use $2+\varepsilon/2$ resource to execute the Shortest Job First algorithm. 
\end{proof}

We show, however, that by itself Proportional Allocation with any amount of resource augmentation performs poorly for average response time, while Shortest Job First performs poorly for maximum response time if resource augmentation is at most 2.

\begin{lemma}
Proportional Allocation with resource augmentation $\alpha$ has competitive ratio $\Omega(m/\alpha)$ for average response time, where $m$ is the number of requests, for any $\alpha > 0$. 
\label{thm:propalloc_avg}
\end{lemma}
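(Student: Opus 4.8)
The plan is to construct a family of ECFS instances on which Proportional Allocation (with resource augmentation $\alpha$) is forced to split a single "heavy" job into a number of fractions proportional to $m$, thereby delaying its completion by $\Omega(m/\alpha)$ rounds, while an optimal schedule completes everything in $O(1)$ rounds. Since Proportional Allocation assigns every pending job on a node $i$ a fraction of roughly $c_i/(\alpha^{-1}\cdot\text{load on }i)$ of the augmented capacity, the key observation is that if we keep a node continuously loaded with $\Theta(\alpha \cdot L)$ worth of pending demand for many rounds, then each individual job on that node receives only a $\Theta(1/L)$ fraction per round (up to the $\alpha$ factor), so a job needs $\Theta(L/\alpha)$ rounds to finish. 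By making $L$ grow linearly with $m$ — i.e. dumping a large batch of jobs on one node in a single round — we get the claimed bound.

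**The construction.**

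First I would use four unit-capacity nodes $a, b, c, d$ and unit-demand jobs (mirroring the gap instance of Lemma~\ref{thm:interval_gap}). In round $1$, release $N$ jobs all on the pair $\{a,b\}$; this is the only batch that ever arrives on $a$ or $b$. Then, to prevent Proportional Allocation from draining these jobs quickly, in each round $t = 1, 2, \ldots$ also release one job on $\{a,c\}$ and one job on $\{b,d\}$, continuing for $\Theta(N)$ rounds. Because Proportional Allocation splits every pending job and shares the augmented capacity of $a$ among all pending jobs adjacent to $a$ (similarly for $b$), each of the original $N$ jobs receives at most an $\alpha/N$ fraction per round as long as $\Theta(N)$ jobs remain pending on $a$ — and I would argue by a simple invariant (the number of $\{a,b\}$-jobs still pending stays $\geq N/2$ for the first $\Theta(N/\alpha)$ rounds, since in each such round Proportional Allocation completes at most $O(\alpha)$ of them). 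Hence the last $\{a,b\}$-job completes no earlier than round $\Omega(N/\alpha)$, and in fact a constant fraction of them complete that late, giving total response time $\Omega(N^2/\alpha)$ over $m = \Theta(N)$ jobs, i.e. average response time $\Omega(N/\alpha) = \Omega(m/\alpha)$.

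**The optimal schedule and wrap-up.**

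Second I would exhibit a (splitting or even nonsplitting) schedule with $O(1)$ average response time: the $\{a,c\}$ and $\{b,d\}$ jobs are never in conflict with each other and only one of each arrives per round, so they can be served in their arrival round; meanwhile the $N$ jobs on $\{a,b\}$ must be spread over $\Theta(N)$ rounds regardless — but here is the subtlety, so I would instead weaken the interleaving pattern (e.g. stop releasing the $\{a,c\}/\{b,d\}$ jobs, or release them only every other round) enough that an optimal schedule can finish the $\{a,b\}$ batch in $\Theta(N)$ rounds with each job's response time $O(N)$... no: to get competitive ratio $\Omega(m/\alpha)$ I need $\opt$'s average to be $O(\alpha)$ or $O(1)$, which forces $\opt$ to finish essentially all of the heavy batch early. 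The cleanest fix, which I would adopt, is to not use a single heavy batch but rather $\Theta(\sqrt m)$ separate heavy nodes each receiving a batch of $\Theta(\sqrt m)$ jobs at staggered times so that $\opt$ handles each batch in $O(1)$ amortized rounds while Proportional Allocation, which spreads capacity, still pays $\Omega(\sqrt m/\alpha)$ per batch — or, simplest of all, observe that with $\alpha$ augmentation Proportional Allocation on a single round of $N$ unit jobs on one unit node gives the $j$-th completed job response time $\geq j/\alpha$, total $\Omega(N^2/\alpha)$, whereas the offline optimum has response time exactly $\lceil j \alpha^{-1}\rceil$... . The main obstacle is precisely this: pinning down an instance where Proportional Allocation's capacity-spreading provably costs a factor $m/\alpha$ over optimum *in the average*, rather than just in the maximum; I expect the resolution is to have $\opt$ exploit that it need not split, serving one whole job per round per node, so that its total is $\sum_j j/(\alpha\,\text{-free capacity})$ with the same $\alpha$-free rate — and then the gap comes entirely from Proportional Allocation needlessly spreading a job that could have been finished in one round across $\Theta(L/\alpha)$ rounds because of the artificially inflated load maintained by the auxiliary jobs. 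I would therefore keep the auxiliary-job interleaving but make it sparse enough (one auxiliary job per node every $\alpha$ rounds) that $\opt$'s load never exceeds $2$ while Proportional Allocation's effective per-job rate stays $O(\alpha/N)$, completing the argument.
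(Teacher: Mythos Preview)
Your proposal has a genuine gap: every construction you try uses only \emph{unit} demand jobs, and with unit demands alone you cannot get a competitive ratio that grows with $m$. Consider your simplest scenario, $N$ unit jobs on a single unit-capacity node released at time~$1$. Proportional Allocation with augmentation $\alpha$ gives each job a fraction $\alpha/N$ per round, so all $N$ jobs finish together at round $N/\alpha$, for total response time $N^2/\alpha$. But \opt\ (no augmentation) must also serve these $N$ jobs sequentially, one per round, for total response time $1+2+\cdots+N = \Theta(N^2)$. The ratio is $\Theta(1/\alpha)$, a constant in $m$. Your auxiliary-job variants do not escape this: as long as a batch of $N$ unit jobs sits on a single unit-capacity node, \opt\ itself needs $\Theta(N)$ rounds to drain it, so \opt's average on that batch is already $\Theta(N)$, and no amount of sparsifying the auxiliary jobs changes that. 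You correctly identified the obstacle (``pinning down an instance where \opt's average is $O(1)$'') but none of your four attempts resolves it.

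The missing idea, which the paper uses, is to introduce jobs of \emph{different sizes}: release $k$ unit jobs together with a single job of demand $M$ on one unit-capacity node. Now \opt\ can prioritize: serve the $k$ unit jobs in rounds $1,\ldots,k$ (total $\Theta(k^2)$) and then the big job in the next $M$ rounds (contributing $M+k$), for total $O(M+k^2)$. Proportional Allocation, by contrast, assigns every job a fraction $\alpha/(M+k)$ per round, so \emph{all} $k+1$ jobs finish simultaneously at round $(M+k)/\alpha$, for total $\Theta(k(M+k)/\alpha)$. Setting $M = k^2$ makes \opt's total $\Theta(k^2)$ while Proportional Allocation's total is $\Theta(k^3/\alpha)$, giving ratio $\Theta(k/\alpha) = \Theta(m/\alpha)$. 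The whole point is that Proportional Allocation wastes the small jobs' time by making them wait for the big job, whereas \opt\ lets the small jobs cut in line; this asymmetry is impossible to create when all jobs are the same size.
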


\begin{proof}
Consider the following family of instances. At time 1, $k$ unit requests and one request with demand $M$ all arrive on a unit capacity node $i$ (no other jobs arrive and the jobs' other endpoints are irrelevant). Consider the schedule which executes all unit requests in the first $k$ rounds, and then schedules the size $M$ request in the following $M$ rounds. In this case, the sum of response times is $1 + 2 + \cdots + k + (M+k) \le M + k^2$.

On the other hand, Proportional Allocation with $\alpha$ resource augmenation schedules a $\alpha/(M+k)$ fraction of each request in each round until all requests are completed. In this case, all requests are completed in the same round no earlier than $(M+k)/\alpha$. Therefore, the sum of response times is at least $(M+k)(k+1)/\alpha \ge Mk/\alpha$. 

Setting $M = k^2$, the competitive ratio is at least $\Omega(k/\alpha)$. Since $m = k+1$, this proves the result. 
\end{proof}

\begin{lemma}
Shortest Job First with resource augmentation 2 has competitive ratio $\Omega(m)$ for max response time, where $m$ is the number of requests.
\label{thm:shortestfirst_max}
\end{lemma}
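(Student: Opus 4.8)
The plan is to reproduce, in the endpoint‑capacity model, the classical fact that shortest‑job‑first starves long jobs, and then to exhibit a non‑augmented schedule that avoids the starvation. Fix a large integer $N$ and a constant $\varepsilon \in (0,1)$. The instance uses a single hub node $a$ of unit capacity and $N+1$ leaf nodes $b, c_1, \dots, c_N$, each of unit capacity. At time $1$ release one \emph{long} job $J$ on the edge $\{a,b\}$ with demand $1+\varepsilon$, and for each $t = 1, \dots, N$ release a unit‑demand job $u_t$ on the edge $\{a,c_t\}$ at time $t$. The only contended node is $a$; every leaf is incident to a single request, so leaves never constrain anything. The number of requests is $m = N+1$.

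First I would track SJF run with capacity augmentation $2$, so node $a$ may carry load $2$ per round. In every round $t \le N$ the jobs pending at $a$ are $J$, of demand $1+\varepsilon$, and the unit job $u_t$ (earlier unit jobs were finished in the round they arrived). SJF considers the shorter job $u_t$ first and schedules it, after which only $2 - 1 = 1$ unit of capacity remains on $a$ — strictly less than $d_J = 1+\varepsilon$ — so $J$ is deferred and receives no service. Hence $J$ gets nothing in rounds $1, \dots, N$ and is finally scheduled only in round $N+1$ (when $d_J \le 2$ fits in the now‑idle hub), giving it response time $N+1$; so SJF's maximum response time is $N+1$. The identical argument applies for any augmentation $\gamma \le 2$, since then the residual capacity $\gamma - 1 \le 1$ is still below $1+\varepsilon$ — this is why $J$ is given demand $1+\varepsilon$ rather than $1$, and it is what pins the threshold at augmentation $2$.

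Next I would upper bound the optimum by exhibiting a non‑augmented schedule with maximum response time at most $3$. Apart from the single extra job $J$ at time $1$, exactly one unit of load arrives on $a$ per round, so unit capacity keeps the backlog tiny: serve $u_1$ in round $1$; serve $J$ over rounds $2$ and $3$, using one unit of capacity on $a$ in round $2$ and the remaining $\varepsilon$ in round $3$; and process $u_2, u_3, \dots$ one after another, each split as $1-\varepsilon$ in one round and $\varepsilon$ in the next, so that node $a$ always carries exactly one unit of load and no request waits more than $3$ rounds. Thus the optimum is at most $3$, and the competitive ratio of SJF is at least $(N+1)/3 = \Omega(m)$.

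The main obstacle is making the first step robust to the precise formalization of SJF: one must use a version in which a long job is never ``backfilled'' into leftover capacity while strictly shorter jobs are pending, and one must ensure $J$ loses every demand comparison — both secured by giving $J$ demand $1+\varepsilon$ rather than $1$ (and by noting $1+\varepsilon \le 2$, so $J$ is schedulable in full once the hub is idle). A minor point to verify in the second step is the feasibility of the $3$‑competitive schedule on the capacity‑$1$ hub over rounds $1$ through $3$, where $J$ overlaps with three unit jobs; this works precisely because every leaf is incident to a single request, so no leaf ever carries load exceeding its capacity.
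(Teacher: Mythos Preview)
Your construction has a real gap in the splittable setting, which is the setting of this lemma (Shortest Job First here is the fractional algorithm of Dinitz--Moseley, listed in Table~\ref{tab:bounds} under ``splittable jobs''). With augmentation $2$, in each round $t \le N$ SJF schedules $u_t$ and then has one unit of residual capacity at $a$; since jobs may be split, SJF allocates that unit to $J$, so $J$ receives one unit of service in round~$1$ and the remaining $\varepsilon$ in round~$2$, finishing with response time~$2$. Your starvation argument relies on $J$ being scheduled only if it fits integrally, but that is not how SJF behaves in this model --- and you yourself split $J$ in your optimal schedule. The caveat you flag at the end (``one must use a version in which a long job is never backfilled'') is exactly the point of failure, not a technicality to be assumed away.

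The paper's construction sidesteps this by ensuring that, in \emph{every} round, the total demand of strictly shorter jobs at some endpoint of the long job already equals the augmented capacity. Two unit jobs arrive on $\{a,b\}$ in odd rounds and two on $\{c,d\}$ in even rounds, while the long job $j^*$ of demand $2$ lives on $\{b,d\}$; thus in every round one of $b,d$ is saturated to $2$ by fresh unit jobs, leaving zero residual for $j^*$ even under splitting. You cannot simply double the unit arrivals on your single hub $a$ to get the same effect, because then the non-augmented optimum would face two units of load per round on a unit-capacity node and its backlog would grow linearly. The alternation between two endpoints is what lets the paper keep each node's average arrival rate at one unit per round (so the optimum stays bounded) while blocking the long job at one of its two endpoints in every round.
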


\begin{proof}
Consider the following instance over four unit capacity nodes $a,b,c,d$ as shown Figure \ref{fig:counter_shortest}. In all rounds $t = 1, 2, \ldots, T$, if $t$ is odd then two unit jobs arrive on nodes $\{a, b\}$, and if $t$ is even then two unit jobs arrive on nodes $\{c, d\}$. Additionally, one request $j^*$ arrives on nodes $\{b, d\}$ with $d_{j^*} = 2$ in round 1. 

It is straightforward to give a schedule with maximum response time 4. First execute the job $j^*$ over rounds 1 and 2, then schedule the remaining jobs in FIFO order. All jobs on $\{a, b\}$ are independent of jobs on $\{c, d\}$, so we can execute them in parallel. So, in every round $t \ge 3$ we execute one job adjacent to all nodes and in any interval (not including round 1) of length 2 or greater the number of nodes that arrive on any port is at most the length of the interval. 
Therefore, all jobs $j \ne j^*$ are completed in time $r_j + 4$.

On the other hand, even with a resource augmentation factor of 2, Shortest Job First never schedules the request $j^*$ until round $T+1$. With resource augmentation of 2 (or less), Shortest Job First always gives preference to the newer, unit size jobs that arrive in each round on either node $b$ or node $d$, and scheduling these jobs uses all the (augmented) capacity of the node. Therefore, Shortest Job First has competitive ratio $\Omega(T) = \Omega(m)$ for maximum response time with resource augmentation of 2 or less.
\end{proof}

\begin{figure}
    \centering
    \ifNoFormat \includegraphics[width=\columnwidth]{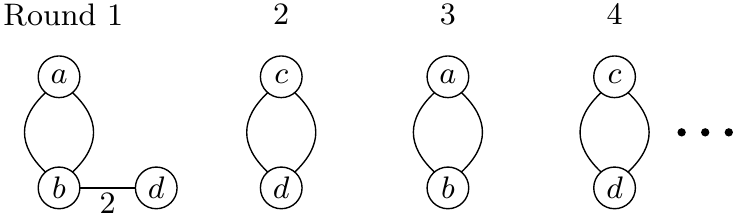} \fi
    \ifAPOCS \includegraphics[]{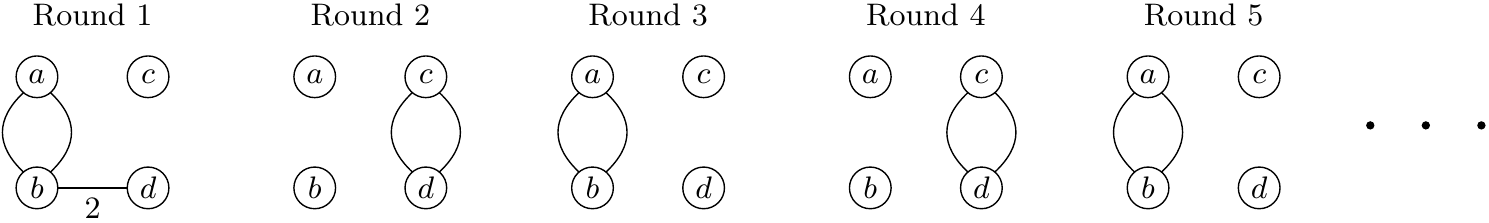} \fi
    \caption{Job arrival sequence for lower bound on max response time for Shortest Job First with 2 resource augmentation. The request that arrives on nodes $b$ and $d$ in round 1 has demand 2. All other requests have unit demand. All nodes have unit capacity. }
    \label{fig:counter_shortest}
\end{figure}

The lemma implies that, given the same resources, Shortest Job First cannot match the competitiveness of Proportional Allocation with respect to maximum response time. However, it remains open whether Shortest Job First with $2+\varepsilon$ resources achieves any well-bounded competitive ratio for maximum response time.

\subsection{Nonsplitting Schedules for Unit Demands.}
In the case of nonsplitting schedules for unit demands, we show that the simple FIFO algorithm given in Algorithm~\ref{alg:fifo}, with a resource augmentation factor of $(2+k)$ is simultaneously competitive for the average response time and maximum response time metrics, for any positive integer $k$.  We begin by analyzing FIFO under the maximum response time metric. 

\begin{algorithm2e}
\KwData{augmentation parameter $k \in \{1,2\}$}
$P \leftarrow \varnothing$ \tcp{the set of pending requests}
\For{each round $t = 1, 2, 3, \ldots$}{
    $P \leftarrow P \cup \{j : r_j = t\}$\;
    \For{each request $j = \{i_1, i_2\}$ in $P$ in release time order}{
        \If{the load on $i_1$ is less than $(2+k)c_{i_1}$ and the load on $i_2$ is less than $(2+k)c_{i_2}$}{
            schedule $j$ and $P \leftarrow P \setminus \{j\}$
        }
    }
}
\caption{FIFO Maximal Matching with a resource augmentation factor of $(2+k)$ \label{alg:fifo}}
\end{algorithm2e}

\begin{lemma}
For maximum response time metric, with a resource augmentation of a factor of $(2+k)$, FIFO Maximal Matching is $\max\{2/k,1\}$-competitive. 
\label{thm:FIFO_upperbounds}
\end{lemma}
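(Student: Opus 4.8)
The plan is to reduce the claim to the interval lower bound, just as in the proof of Theorem~\ref{thm:batch}. By Lemma~\ref{lem:interval_lowerbound} we have $\opt \ge L$, where $L$ is the interval lower bound of Equation~\ref{eq:interval_lower_bound}, so it suffices to show that FIFO Maximal Matching with augmentation $2+k$ completes every request within $\max\{2/k,1\}\cdot L$ rounds of its release. The structural fact I would extract from Algorithm~\ref{alg:fifo} is the effect of scanning pending requests in release order: if a request $j=\{i_1,i_2\}$ is pending in round $t$ but is not scheduled, then at the moment $j$ is examined one of its endpoints, call it $i[t]$, already carries its full augmented capacity $(2+k)c_{i[t]}$; moreover, the requests loading $i[t]$ were examined before $j$, so each of them has release time at most $r_j$, and once $i[t]$ is saturated no further request lands on $i[t]$ during round $t$.

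Now fix a request $j^{*}=\{i_1,i_2\}$ realizing the maximum response time $R$. In each of the (on the order of $R$) rounds in which $j^{*}$ is pending and unscheduled, one endpoint is saturated by requests no younger than $j^{*}$, so by pigeonhole one endpoint, say $i_1$, plays this role in at least half of those rounds, and in each such round $i_1$ schedules $2+k$ units of load belonging to requests released by round $r_{j^{*}}$. Let $t_1 \le r_{j^{*}}$ be the last round at whose start $i_1$ is empty; then every request incident to $i_1$ that is pending at the start of round $r_{j^{*}}$ was released within $[t_1,r_{j^{*}}]$, and the interval lower bound caps the total load of such requests at $c_{i_1}\big((r_{j^{*}}-t_1+1)+(L-1)\big)$. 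That load has to absorb both the roughly $(2+k)(R/2)$ units that $i_1$ schedules after $r_{j^{*}}$ and everything FIFO has already scheduled on $i_1$ inside $[t_1,r_{j^{*}}]$. To turn this pair of inequalities into a bound on $R$ I would prove, by induction on rounds (mirroring the batch-length induction in Theorem~\ref{thm:batch}), a backlog invariant: at the start of every round the pending load on each node $i$ that comes from requests released at least $s$ rounds earlier is at most $c_i$ times a quantity that is $O(L)$ when $s=0$ and decreases at a rate proportional to $k$ as $s$ grows. The inductive step would combine (i) the fact, immediate from Equation~\ref{eq:interval_lower_bound}, that the load arriving on any node over any window exceeds the window length by at most $L-1$; (ii) the fact that FIFO gives strict priority to older requests on each node; and (iii) the fact that augmentation lets a node clear $2+k$ units per round, so any backlog on $i$ that is not itself stalled behind other saturated nodes drains at net rate at least $k+1$. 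Feeding this invariant into the bounds above yields $R\le\max\{2/k,1\}\cdot L$, and $\opt \ge L$ then gives the competitive ratio $\max\{2/k,1\}$; the lower clip at $1$ simply records that no online schedule can beat $\opt$.

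The step I expect to be the real obstacle is the backlog invariant --- in particular the handling of requests on a node $i$ that are blocked not because $i$ is saturated but because their \emph{other} endpoint is. By the release-order observation, each such request witnesses that the blocking node absorbed $2+k$ no-younger units, so one must route this \emph{external} blocking through the interval lower bound of the blocking node and argue that the resulting chain of conflicts cannot compound past the advertised factor. This is precisely the difficulty that batching neutralizes in Theorem~\ref{thm:batch} --- each new batch starts from a clean slate --- so carrying the argument out for a purely greedy, non-batching FIFO schedule is where the real work lies.
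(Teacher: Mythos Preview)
Your setup is right --- the interval lower bound is the correct yardstick, the pigeonhole on endpoints is the right move, and you have correctly isolated the crux: a request on $i_1$ may sit in the queue not because $i_1$ is busy but because its \emph{other} endpoint is, so the ``last empty time'' $t_1$ on $i_1$ could be arbitrarily far back, and your accounting of load arriving in $[t_1,r_{j^*}]$ versus load drained does not close without the unproved backlog invariant. As written, the proposal is a plan with an acknowledged hole rather than a proof.

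The paper dissolves exactly this obstacle with a much simpler inductive hypothesis than your backlog invariant. Instead of tracking, for each node and each age $s$, how much $s$-old load can be pending, it inducts directly on the claim ``every request released before round $t$ completes within $\gamma L$ rounds,'' where $\gamma = 2/k$. Now suppose $j^*$ violates this at round $t+\gamma L$. As you observed, in at least half of the rounds $t,\ldots,t+\gamma L$ one endpoint, say $i$, is saturated by requests released no later than $r_{j^*}\le t$. The key step you are missing is the \emph{lower} bound on their release times: each of these blocking requests is still pending in some round $\ge t$, so by the inductive hypothesis it was released no earlier than $t-(\gamma L-1)$. Thus all of them (and $j^*$ itself) land on $i$ within the fixed window $\{t-(\gamma L-1),\ldots,t\}$ of length $\gamma L$, and the interval lower bound caps that at $\gamma L + L - 1$ requests, contradicting the $(2+k)\cdot(\gamma L/2)+1 = \gamma L + L + 1$ requests you just counted.

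In short, the inductive hypothesis does double duty: it is both the statement to be proved and the tool that bounds the age of the blockers, so no separate backlog invariant or chain-of-conflicts argument is needed. Your route via $t_1$ and a node-local drain-rate invariant might be salvageable, but it is strictly harder than necessary; the paper's induction on the response-time bound itself is the idea you want.
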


\begin{proof}
We restrict our attention to $k \in \{1,2\}$ since the claim for $k > 2$ is subsumed by the case $k = 2$.  For ease of notation, let $\gamma = 2/k$. Since $k \in \{1,2\}$, we have that $\gamma \in \{1,2\}$. 

We prove by induction on $t$ that in any round $t + \gamma L$ there is no pending request that arrives more than $\gamma L$ rounds earlier, where $L$ is the interval lower bound give in Equation~\ref{eq:interval_lower_bound}. The base case when $t=0$ is trivially true. For the induction step, suppose that the claim holds for all $t'$ such that $0 \le t' \le t$. Suppose, for the sake of contradiction, that there exists a request $j^*$ on nodes $i$ and $i'$ such that $j^*$ is pending in round $t+\gamma L$ and $r_{j^*} \le t$. By definition of the algorithm, it must be the case that in each round $t \in \{t, t+1, \ldots, t+ \gamma L\}$, either $i$ or $i'$ is at full capacity. So we can infer that either $i$ or $i'$ is at full capacity for at least half these rounds. Without loss of generality, we assume it is $i$.  

Given the resource augmentation, the total load executed on $i$ in the interval $\{t, t+1, \ldots, t+ \gamma L\}$ is at least $(2+k) \cdot (\gamma L / 2) = \gamma L + L$ by definition of $\gamma$. Consider the set of jobs $J^*$ that comprise this load. Because FIFO Maximal Matching orders requests by their arrival time, we have that $r_j \le r_{j^*} \le t$ for all $j \in J^*$. By the inductive hypothesis, we infer that all $j \in J^*$ complete within $\gamma L$ of their release time. Since these jobs are pending at time $t$ or later, we have that $r_j \ge t - (\gamma L - 1)$ for all $j \in J^* \cup \{j^*\}$. Therefore, the number of jobs released on node $i$ in rounds $\{t-(\gamma L - 1), t - (\gamma L - 2), \ldots, t\}$ is at least $\gamma L + L +1$. However, this contradicts the definition of $L$, and so completes the induction. 

Therefore, the maximum response time of FIFO Maximal Matching is $\gamma L = (2/k) L$. Since the optimal response time is at least $L$ and no algorithm has maximum response time better than 1, FIFO Maximal Matching achieves a competitive ratio of $\max\{2/k,1\}$. 
\end{proof}

\junk{
\comment{OLD PROOF}
Suppose for contradiction that there exists a request $j$ on nodes $i$ and $i'$ that is still pending at time $t +\gamma L$ after rounds after its arrival at time $t$. Let $p$ and $q$ be the endpoints of request $j$. In each round of interval $[t, t+\gamma L]$, it must be the case that either node $p$ and node $q$ must serve to the its full capacity, because request $j$ is available in each round of the interval. Therefore, we know that there exists a node, either node $p$ or $q$, that serves to the its full capacity at least half of the time in the interval $[t, t+\gamma L]$. 

Without loss of generality, let it be $p$. In the interval $[t, t+\gamma L]$, with node capacity $(2 + k)$, node $p$ serves at least  $(\gamma L/2)(2+k) = \gamma L +  L$ by definition of $\gamma$. Because FIFO Maximal Matching algorithm orders the request by their arrival time, we have that these requests of demand at least $\gamma L +  L $ on node $p$ must have arrive before $t$. By our induction hypothesis, we also have that they must have arrived no earlier than time $t - \gamma L$. Therefore, in the interval $[t-\gamma L, t]$, the total demand on node $p$ is at least $\gamma L + L + 1$. This is a contradiction, because from the definition of the interval lower bound, in an interval of length $\gamma L$, the maximum total demand on a node $p$ is $\gamma L + L$.

Hence, each request has response time at most $\gamma L$. Since the maximum response time of OPT is at least $L$, the algorithm achieves a competitve ratio of $2/k$.
}

Moreover, we know from Theorem~\rom{4.5} of~\cite{dinitz+moseley.reconfigurable.20} that with resource augmentation $(2+k)$, FIFO is $2(2+k)/k$-competitive for average response time. Putting this result together with Lemma~\ref{thm:FIFO_upperbounds}, we obtain that with a resource augmentation of $(2+k)$ for positive integer $k$, FIFO is simultaneously $2(2+k)/k$-competitive under the average response time metric and $2/k$-competitive under the maximum response time metric.



\section{Discussion and Open Problems}

In this paper, we have shown a linear lower bound on the competitiveness of any \ecfs\ schedule for the maximum response time metric, without resource augmentation.  We then derived improved tradeoffs between resource augmentation and competitive ratio for maximum response time, for different problem settings (general vs unit case, splittable vs unsplittable jobs) using simple algorithmic approaches.  Finally, we explored simultaneous approximation of average and response time metrics and proposed a hybrid algorithm based on Proportional Allocation and Shortest Job First for general demands and FIFO for unit case.

A number of open problems remain.  First, we would like to find the best tradeoffs possible between resource augmentation and competitive ratio for response time metrics.  Toward this end, we need to derive general lower bounds under resource augmentation.  Of special interest is whether a bounded competitive ratio can be achieved for average response time with augmentation less than two.  Second, deriving improved algorithms and bounds for simultaneous approximation of response time metrics is of practical interest.  A natural direction in this regard is to consider $\ell_p$ norms of response time metrics, which would provide a spectrum with average response time ($p = 1$) and maximum response time ($p = \infty$) at each end.  We have been able to show that the results of~\cite{dinitz+moseley.reconfigurable.20} for Shortest Job First and FIFO qualitatively extend to $\ell_p$ norms, for finite $p$, implying that the simultaneous approximations  established in Section~\ref{sec:simultaneous} generalize to $\ell_p$ norms of response times.

\ifNoFormat
\balance
\fi 

\section*{Acknowledgments}
We thank Michael Dinitz for helpful discussions on the different models and problem formulations. This work was partially supported by NSF grant CCF-1909363.

\newpage
\bibliographystyle{plain}
\bibliography{refs}

\newpage

\end{document}